\theoremstyle{plain} \newtheorem{lemma}{Lemma}
\theoremstyle{plain} \newtheorem{theorem}{Theorem}
\theoremstyle{plain} \newtheorem{corollary}{Corollary}
\theoremstyle{plain} \newtheorem{remark}{Remark}
\theoremstyle{plain} \newtheorem{definition}{Definition}
\title{\textbf{Bi-Criteria and Approximation Algorithms for Restricted Matchings}\thanks{A preliminary version of this work appeared in Proceedings of the \textit{2nd International Symposium on Combinatorial Optimization} (ISCO 2012), Lecture Notes in Computer Science.   Supported by the Swiss National Science Foundation Project N.200020-122110/1 ``Approximation Algorithms for Machine Scheduling Through Theory and Experiments III" and by Hasler Foundation Grant 11099.}}
\author{Monaldo Mastrolilli and Georgios Stamoulis \\ Istituto Dalle Molle di Studi sull'Intelligenza Artificiale \\ IDSIA (USI/SUPSI) \\ Manno-Lugano\\ Switzerland \\ \textsf{\{monaldo, georgios\}@idsia.ch}}
\begin{document}

\date{}
\maketitle

\begin{abstract}
In this work we study approximation algorithms for the \textit{Bounded Color Matching} problem (a.k.a. Restricted Matching problem) which is defined as follows: given a graph in which each edge $e$ has a color $c_e$ and a profit $p_e \in \mathbb{Q}^+$, we want to compute a maximum (cardinality or profit) matching in which no more than $w_j \in \mathbb{Z}^+$ edges of color $c_j$ are present. This kind of problems, beside the theoretical interest on its own right, emerges in multi-fiber optical networking systems, where we interpret each unique wavelength  that can travel through the fiber as a color class and we would like to establish communication between pairs of systems. We study approximation and bi-criteria algorithms for this problem which are based  on linear programming techniques and, in particular, on polyhedral characterizations of the natural linear formulation of the problem. In our setting, we allow violations of the bounds $w_j$ and we model our problem as a bi-criteria problem: we have two objectives to optimize namely (a) to maximize the profit (maximum matching) while (b) minimizing the violation of the color bounds. We prove how we can ``beat" the integrality gap of the natural linear programming formulation of the problem by allowing only a slight violation of the color bounds. In particular, our main result is \textit{constant} approximation bounds for both criteria of the corresponding bi-criteria optimization problem.

\bigskip
\noindent
{\bf Keywords:} Approximation Algorithms; Combinatorial Optimization; Linear Programming; Graph Algorithms
\end{abstract}

\section{Introduction}											                                               	 

Consider the following game: we organize a competition in a school and we have a set of \textit{binary} games such as chess, GO, tavli (a.k.a.backgammon) etc. Some pairs of students are interested in playing one particular game, whereas some other pairs are interested in some other game. We only have  a limited amount of free boards for a particular game. Ideally, we would like to satisfy as many pairs of students as possible with the available amount of boards. This simple  game captures exactly the essence of the problem of this article: we can formulate the above scenario as a graph $G=(V,E)$, where $V$ is the set of students, and two students that are interested in a particular game (say chess) are connected with an edge of a particular color (say black) associated with this game. Let $w_{j} \in \mathbb{N}$ be the number of available boards of game $j$. Then the task of the organizers is to compute a maximum matching that uses at most $w_j$ boards of game $j$. Call this problem  Bounded Color Matching problem.

More formally, the problem can be stated as follows:

\begin{definition}[Bounded Color Matching]
We are given a (simple, un-directed) graph $G=(V,E)$ with vertex set $V$ and edge set $E$ such that $|V|=n$ and $|E|=m$. The edge set is partitioned into $k$ sets $E_1 \cup E_2 \cup \cdots \cup E_k$ i.e. every edge $e$ has color $C_j$ if $e \in E_j$ and a profit $p_e \in \mathbb{Q}^+$. We are asked to find a maximum (weighted) matching $M$ (or a matching of maximum cardinality) such that in $M$ there are no more that $w_j$ edges of color $C_j$, where $w_j \in \mathbb{Z^+}$ i.e. a matching $M$ such that $|M \cap E_j| \leq w_j$, $\forall j \in [k]$.
\end{definition}

In the following, we denote as $\mathcal{C}$ the collection of all the color classes. In other words, $\mathcal{C}= \{ C_j \}_{j\in [k]}$. Moreover, for a given edge $e \in E(G)$, we denote by $c(e)$ its color i.e. $c(e) = C_j \Leftrightarrow e \in E_j$.

\medskip
Besides the previously mentioned toy problem, Bounded Color Matching emerges in optical networking systems: in an optical fiber we allow multiplexing of different frequencies (i.e. different beams of light can travel at the same time inside the same fiber), but we have limited capacities of the number of light beams of a particular frequency that we allow to travel simultaneously through the system, due to potential interference problems. We would like to establish connections between a maximum number of (disjoint) pairs of systems while at the same time respecting the maximum number of connections using the same frequency we allow in multiplexing. Moreover, the Bounded Color Matching problem with 2 colors (i.e. the case that each edge is colored either blue or red) can be used in approximately solving the Directed Maximum Routing and Wavelength Assignment problem (DirMRWA) \cite{conf/infocom/NomikosPZ03} in \textit{rings} which are fundamental network topologies, see \cite{DBLP:conf/mfcs/NomikosPZ07} (also \cite{DBLP:conf/stacs/Caragiannis07, DBLP:journals/siamdm/Caragiannis09} for alternative and slightly better approximation algorithms and  \cite{DBLP:journals/networks/BampasPP11} for combinatorial algorithms). Here, approximately solving means that an (asymptotic) $\alpha$-approximation algorithm for maximum blue-red matching results in an (asymptotic) $\frac{\alpha +1}{\alpha+2}$-approximation algorithm for DirMRWA in rings.

\subsection{History and Related Results}
Characterizations and algorithms for maximum matchings in graphs have a very long history. One of the first attempts to characterize the structure of matchings was as early as the 1957 when Claude Berge characterized the structure of maximum matchings with respect to alternating and augmenting paths \cite{berge}: a matching $M$ on a given graph $G$ is maximum if and only if $G$ contains no $M$-augmenting paths. A path that alternates between edges in $M$ and edges not in $M$ (for a given matching $M$) is called an $M$-alternating path. An $M$-alternating path whose endpoints are unsaturated  by $M$ (i.e. vertices that do not have edges incident to them that are in $M$) is called an $M$-augmenting path.  $M$-augmenting paths provide a certificate of the non-maximality of $M$.

Given this characterization of maximum matchings, an algorithm is immediate for computing a maximum matching $M$ on a graph $G$:

\begin{quote}
\textsf{Initialize} $M := \emptyset$

\textsf{while} there exists an $M$ augmenting path $P$

\textsf{do} augment $M$ along $P$.
\end{quote}

Of course the running time of the above general algorithm depends on how fast we can find $M$-augmenting paths on a graph $G$ with $m$ edges and $n$ vertices. In case the graph is bipartite finding maximum matchings  can be done (relatively) easily in time $\mathcal{O}(m \sqrt{n})$ \cite{DBLP:conf/focs/HopcroftK71, DBLP:journals/siamcomp/HopcroftK73}, beating the ``trivial" brute-force approach which simply enumerates all possible $M$ augmenting paths which takes time $\mathcal{O}(nm)$. We remind that a graph $G$ is bipartite if its vertex set can be partitioned into two sets $V_1, V_2$ such that  every edge connects a vertex in $V_1$ to one in $V_2$; that is, $V_1$ and $V_2$ are independent sets. Equivalently, a bipartite graph is a graph that does not contain any odd-length cycles.

The case when $G$ is not bipartite is significantly more complicated because of the presence of odd-length cycles. In his seminal 1965 article, Jack Edmonds presented a $\mathcal{O}(n^2m)$ time algorithm for solving the maximum matching problem in general graphs \cite{Edmonds1965a}. In fact, it was precisely this article that introduced the concept of polynomially time  solvable problems as ``tractable" problems. As it always happens, the running time of this algorithm has been significantly improved over the years. In \cite{DBLP:conf/focs/EvenK75}, by a sophisticated use of some data structures,  a running time of $\min\{\sqrt{n}m \log n, n^{2.5}\}$ was shown for  computing a maximum matching in a graph $G$, which was later improved to $\mathcal{O}(n^{2.5})$, see \cite{DBLP:conf/focs/MicaliV80}.

The previous algorithms are purely combinatorial. Other very successful approaches for computing maximum matchings in graphs are based on using of algebraic methods and/or randomization. We will not go into much detail here, except mentioning the most important results, which include a $\mathcal{O}(n^{\omega+1})$ time algorithm \cite{DBLP:journals/jal/RabinV89}, and two $\mathcal{O}(n^{\omega})$ time algorithms \cite{DBLP:conf/focs/MuchaS04} and \cite{DBLP:conf/focs/Harvey06} (see also \cite{DBLP:journals/siamcomp/Harvey09}), where $\omega = \inf\{ c:$ two $n \times n$ matrices can be multiplied in time $\mathcal{O}(n^{c})  \}$.

All the above algorithms work for the \textit{unweighted} (uniform weights) case. In case we have a weight function $p: E \rightarrow \mathbb{Q}^{+}$ and we want to compute a maximum weighted matching, then other techniques are required. The most common technique is the so called \textit{Hungarian method} \cite{Kuhn1955}, which is  a primal-dual technique, initially introduced for bipartite graphs. For general graphs, similar primal-dual techniques have been employed, see for example \cite{Edmonds1965b}, \cite{DBLP:conf/focs/Gabow83, DBLP:journals/jcss/Gabow85} and \cite{DBLP:journals/jacm/GabowT91} among others. The idea, as most of the primal-dual schemata, is to build up feasible primal and dual solutions simultaneously and show that at the end both solutions satisfy complementary slackness conditions and hence by the duality theorem, the primal solution is a maximum weight matching. Another approach for the maximum weight matching problem is to maintain a feasible matching and try to successively augment it to increase its weight, until no more augmenting is possible, see for example \cite{cunningham_marsh}. For comprehensive accounts of the matching problem, we refer to \cite{matching_plummer} and \cite{polyhedra}.

\subsection{Constrained Matching Problems}
Since the task of computing maximum matchings is an extremely well studied and basic problem,  the interest has shifted towards some other versions of maximum matchings, in particular to versions where we seek a maximum matching subject to additional criteria (constraints). These criteria reduce the feasible solution space, making it (usually) harder to compute optimal solutions in polynomial time. In this direction,  Bounded Color Matching problems were studied as early as the 1970s as a very interesting generalization of the classical maximum matching problem: In \cite{garey_johnson}, the problem was defined as \textit{Multiple Choice Matching} (reference problem [GT55]) and proved to be \textbf{NP}-complete even for the very special case where the graph is bipartite, each color class contains at most 2 edges (i.e. $|E_j| \leq 2 $, $\forall j$) and $w_j = 1$, $\forall C_j \in \mathcal{C}$.  This problem, finds numerous practical applications, from classroom scheduling to image segmentation among others, see also \cite{DBLP:conf/icalp/ItaiR77}, \cite{DBLP:journals/jacm/ItaiRT78}.

Moreover, the uniform weight version of Bounded Color Matching problem is also closely related to the \textit{Labeled Matching} problem \cite{DBLP:conf/isaac/Monnot05},  \cite{DBLP:journals/cor/CarrabsCG09} in which all bounds $w_j$ are set equal to 1, i.e. we would like a maximum matching with at most one edge per color. In  \cite{DBLP:conf/isaac/Monnot05} it was proven that even the very special case of 2-regular bipartite graphs where each color appears twice (i.e. in at most two edges), the problem is \textbf{APX}-hard and so a PTAS is immediately out of reach for Bounded Color Matching (see also \cite{DBLP:journals/ipl/Monnot05}).

Budgeted versions of the maximum matching problem have been recently studied intensively. Here, by budgeted version of a combinatorial optimization problem $\Pi$ we mean the following: Besides the profit function $p: \mathcal{F} \rightarrow \mathbb{Q}^+$ associated with every feasible solution $F \in \mathcal{F}$ for $\Pi$ (where $\mathcal{F}$ is the set of all \textit{feasible} solutions for $\Pi$), we are also given a set of $\ell$ \textit{cost} functions $\{ \varrho_i \}_{i \in [\ell]}$ such that $\varrho_i : \mathcal{F} \rightarrow \mathbb{Q^+}$ , and for every cost function $\varrho_i$ a \textit{budget} $\beta_i \in \mathbb{Q}^+$. The budgeted optimization version of $\Pi$, which we call $\Pi_b$,   can be then formulated as follows (assuming that $\Pi$ is a maximization problem):

\begin{equation}
\max ~~p(F), ~~~\textrm{subject to} ~~~ F \in \mathcal{F} ~~\textrm{and}~~ \rho_i(F) \leq \beta_i,~~\forall i \in [\ell]
\end{equation}

In \cite{DBLP:conf/esa/GrandoniZ10} (see also \cite{DBLP:journals/corr/abs-1002-2147}) the authors considered the 2-budgeted maximum matching problem (i.e. the case where $\ell = 2$) and devised a PTAS.  This algorithm works roughly as follows: First of all, a guessing step is performed that guesses the $1/\epsilon$ most valuable edges of the optimal matching. Then, an optimal \textit{fractional} matching $x^*$ is computed for the rest of the graph (for example by solving the linear programming relaxation of the problem). By Caratheodory's theorem \cite{caratheodory}, $x^*$ can be written as a \textit{convex} combination of at most three (possibly unfeasible) matchings i.e. $x^* = \lambda_1 x_1 + \lambda_2 x_2 + (1-\lambda_1 - \lambda_2) x_3$. Then, the algorithm consists of two ``merging" steps: in the first step, given the first two matchings $x_1$ and $x_2$ the output is a third matching $z$ with comparable profit and which is not costlier than $\mu x_1 + (1-\mu)x_2$ for $\mu = \frac{\lambda_1}{\lambda_1 + \lambda_2}$ with respect to both the two extra cost functions. Then the same procedure is again applied to $z$ and $x_3$ with parameter $\mu=\frac{\lambda_1 + \lambda_2 }{ \lambda_1 + \lambda_2 + (1 - \lambda_1 -\lambda_2)}$, i.e. we merge $x_3$ with $z$ such that the new matching $z^*$ is feasible (with respect to both cost functions) and (almost) optimal.

This was further improved in \cite{DBLP:conf/soda/ChekuriVZ11} where it is given a PTAS for a \textit{fixed} number of  budgets. The authors there provided both randomized and deterministic PTAS's for the problem. The randomized version is based on strong concentration bounds of some suitable martingale processes (see also \cite{DBLP:conf/focs/ChekuriVZ10} for some closely related results and techniques). The deterministic PTAS can be seen as a \textit{bi-criteria} approximation, and the final solution returned is within $(1-\epsilon)$ the optimal but it might violate the budgets by a factor of $(1+\epsilon)$ (i.e. the solution $z$ returned has the property that $c_i(z) \leq (1+\epsilon) \beta_i$, $\forall i$). Moreover, for unbounded number of budgets the authors prove an almost optimal approximation guarantee, but with allowing a very large (i.e. \textit{logarithmic}) overflow on the budgets (as before, this means that for the computed solution $z$, $z$ has the property that $c_i(z) \leq \beta_i \log \beta_i$, $\forall$ budget $i$). These results generalize the results for the budgeted \textit{bipartite} matching problem, for which a PTAS was known for the case of one budget \cite{DBLP:conf/ipco/BergerBGS08, DBLP:journals/mp/BergerBGS11}, or in the case of fixed number of budgets \cite{DBLP:conf/esa/GrandoniRS09} in which a $(1-\epsilon, 1+\epsilon)$ bi-criteria approximation was shown.

To the best of our knowledge, the first case where matching problems with cardinality (disjoint) budgets were considered, was in \cite{DBLP:conf/mfcs/NomikosPZ07} where the authors defined and studied the \textit{blue-red} Matching problem:  compute a maximum (cardinality) matching that has at most $w$ blue and at most $w$ red edges, in a blue-red colored (multi)-graph. A $\frac{3}{4}$ polynomial time combinatorial approximation algorithm and an $\mathbf{RNC^2}$ algorithm were presented (that computes the maximum matching that respects both budget bounds with high probability). We note that the exact complexity of the blue-red matching problem is not known: it is only known that blue-red matching is at least as hard as the Exact Matching problem \cite{DBLP:journals/jacm/PapadimitriouY82} whose complexity is open for more than 30 years. A polynomial time algorithm for the blue-red matching problem will imply that Exact Matching is polynomial time solvable. On the other hand, blue-red matching is probably not \textbf{NP}-hard since it admits an $\mathbf{RNC^2}$ algorithm. We note that this algorithm can be extended to a constant number of color classes with arbitrary bounds $w_j$. Using the results of \cite{DBLP:journals/algorithmica/Yuster12}  (also appeared in \cite{DBLP:conf/approx/Yuster07}) one can deduce an ``almost" optimal deterministic algorithm for blue-red matching, i.e. an algorithm that returns a matching of maximum cardinality that violates the two color bounds by at most one edge. This is the best possible, unless of course blue-red matching (and, consequently, exact matching) are in \textbf{P}.

\subsection{Our Contributions}
In this article we study the Bounded Color Matching problem, from a Linear Programming point of view. In particular, we are interested how good approximation algorithms we can design using linear programming methods. The main contribution of the current manuscript is to show how we can ``beat" the integrality gap of the natural LP formulation of the BCM problem, allowing small violation of the color bounds $w_j$.

Before we do that, we firstly  prove that a simple greedy and fast procedure gives a $\frac{1}{3}$ approximate solution. To prove the approximation guarantee of this simple procedure, we use a characterization that was introduced in \cite{menstre_greedy}  to show that our problem falls into the framework of $\ell$-extendible systems. This serves the purpose of a baseline and ``warm-up" result.

Then we design and analyze various algorithms based on Linear Programming techniques. Our algorithms are based on iterative rounding  of basic (fractional) feasible solutions  of the natural Linear Programming formulation of the Bounded Color Matching problem. We employ a fractional charging technique (introduced in \cite{DBLP:journals/siamcomp/BansalKN09}) to characterize the structure of extreme point solutions of the LP relaxation of our problem. Taking advantage of this structure, we provide bi-criteria additive and multiplicative approximation algorithms for both the weighted and unweighted case (see \cite{mohit_thesis} and also \cite{iterative_book} for a comprehensive account of the applications of iterative rounding techniques in the context of combinatorial optimization).

Very generally, our algorithms have two (global) steps:

\begin{enumerate}
\item[-] Either (iteratively) apply a \textit{rounding} step on some variable with high fractional value in such a way that the resulting solution remains feasible, or
\item[-] apply a \textit{relaxation} step in which we decide to drop a budget constraint if a constraint with ``few" non-zero variables exists.
\end{enumerate}

\noindent
Our results (and the structure of this document) can be summarized as follows:

\begin{enumerate}
\item Firstly, as already mentioned, we show that a straightforward greedy strategy results in an $\frac{1}{3}$-approximation guarantee.
\item In the next section we prove some combinatorial properties of the natural linear programming formulation and we apply these techniques in the special case of the BCM problem where $w_j = 1, \forall j \in [k]$. We note that this case remains \textbf{APX}-hard (see related work section). We provide an \textit{asymptotic} approximation of the optimal objective function value by allowing a small \textit{additive} violation of the color bounds $w_j$. In particular we prove that there exists a polynomial time algorithm that, for any $\alpha \in \mathbb{Z}^+$ (in fact we require that $\alpha$ is greater than 3 on bipartite and greater than 4 in general graphs), it computes a matching of value at least $opt (1 - \frac{4}{\alpha}) + \frac{1}{\alpha} +1$ that has at most $\alpha$ edges of every color (where $opt$ is the optimal solution value). This result can be improved to $opt (1 - \frac{3}{\alpha}) + \frac{1}{\alpha} +1$ with the same additive $\alpha - 1$ violation bound in case the graph is bipartite. This means that, by allowing a moderate \textit{additive} violation of $\alpha -1$ for every color class, we can approximate the optimal objective function value within any precision (that depends on $\alpha$). The result holds for the uniform weight case.
\item Then, we further investigate some properties of the underlying polyhedron and using these properties we show how we can in fact obtain a family of bi-criteria approximation algorithms with varying guarantees. In particular, we prove the following:
    \begin{enumerate}
    \item There exists an $\frac{1}{2}$-approximation algorithm for the weighted version of the BCM problem, allowing only an additive one violation of the color bounds $w_j$.
    \item We prove that, for any $\lambda \in [0,1]$, there is a polynomial time $(\frac{2}{3+\lambda}, \frac{2}{1+\lambda} + \frac{1}{w_j})$ bi-criteria approximation algorithm for the un-weighted Bounded Color Matching problem i.e. we prove \textit{constant} approximation bounds with respect to \textit{both} criteria. We note that, to the best of our knowledge, this is the first result that provides such performance guarantees (compare with the $(1-\epsilon)$ approximation but with logarithmic  budget violation of \cite{DBLP:conf/soda/ChekuriVZ11}).
    \item Finally, we present a polynomial time $\frac{1}{2}$-approximation algorithm for the uniform weight Bounded Color Matching problem \textit{without} any violation of the color budgets, matching the integrality gap of the natural linear relaxation of the problem.
     \end{enumerate}
\end{enumerate}

\section{Preliminaries and Theoretical Framework}\label{preliminaries}

Consider the classical matching problem on a general graph $G=(V,E)$. If we introduce binary variables $x_e,~\forall e = \{u,v\} \in E(G)$ where $x_e = 1 \Leftrightarrow e \in M$, then we can describe the problem of finding a maximum matching with a linear program as follows: find a vector $x \in \{ 0,1 \}^{|E|}$ that maximizes $1^T x$ (or $p^T x$ for a general profit vector $p \in \mathbb{Q}^{|E|}_+$) such that adjacent to each vertex, there is at most one variable (edge) that takes the value one. In particular, the linear program is the following:


\begin{eqnarray}\label{bipartite_polyhedron}
\Bigg\{ \max p^T x, ~s.t.~  \sum_{e \in \delta(v)} x_e \leq 1,~\forall v \in V, ~~ x \in \{0,1\}^{|E|}   \Bigg\}
\end{eqnarray}

\noindent
where $\delta(v) = \{ e \in E(G): v \in e  \}$, $\forall v \in V(G)$. The constraint of the form $\sum_{e \in \delta(v)} x_e \leq 1$, $\forall v \in V$, simply tells us that we seek a solution that has at most one edge incident to every vertex of the graph. Unfortunately, solving this integer program is \textbf{NP}-hard. The usual thing to do it to relax the integrality constraints of the variables, i.e., replace the constraint $x \in \{0,1\}^{|E|}$  with the $x \in [0,1]^{|E|}$. This relaxed program can be solved in polynomial time \cite{schrijver}. The problem is that, in general, solving the relaxation of an integer program results in a \textit{fractional} vector. In some cases, we are able to add some valid constraints that cause the solution to be always integer, and this is the case with the maximum matching problem in general graphs: the problem is caused by structures called \textit{blossoms} i.e. holes (cycles) of odd cardinality. Let $S$ be such a cycle of cardinality $|S|$, which is an odd number. Any maximum matching can contain at most $\frac{|S|-1}{2}$ edges from this cycle, but the linear programming relaxation can assign value $\frac{1}{2}$ to every edge of the cycle for a fractional vector of value $\frac{|S|}{2} > \frac{|S|-1}{2}$. To solve this, we add the so-called \textit{blossom inequalities} that constraint exactly this: in every odd cycle, we require that the sum of the values assigned to its edges is at most $\frac{|S|-1}{2}$ and this result to the following restricted polyhedron: Find $x \in [0,1]^{|E|}$ that maximizes $p^Tx$ such that

\begin{eqnarray}\label{general_polyhedron}
\Bigg\{ \sum_{e \in \delta(v)} x_e \leq 1,~\forall v \in V,~~~\sum_{e \in E(S)} x_e \leq \frac{|S| -1}{2},~  \forall S \subseteq V, ~~|S| ~~\textrm{odd cardinality}    \Bigg\}
\end{eqnarray}

\noindent
where by $E(S)$ for $S \subseteq V$ we denote the set of edges included in the graph induced by the vertex set $S$. Although the above linear program described in (\ref{general_polyhedron}) has \textit{exponential} number of constraints for general graphs (one for every vertex and one for every odd sized subset of vertices), we can still solve it in polynomial time by the Ellipsoid method if we provide a separation oracle, which for any given candidate solution  vector ${x} \in [0,1]^{|E|} $  will either respond that ${x}$ is a feasible solution for the linear programming inequalities defined above, or, it will respond that ${x}$ is infeasible by providing the violated constraint. A very interesting fact is that by solving this linear relaxation, the resulting vector is always integral!


These ``blossom" constraints are redundant in case of bipartite graphs (since in bipartite graphs every cycle is of even length), but are essential in our general graph setting. So, when we consider bipartite graphs we will assume that these constraints will not be part of the LP and we will be just using the  initial degree-constrained polyhedron described by (\ref{bipartite_polyhedron}). Again here we have the phenomenon that by solving the linear relaxation, the resulting vector is again integral \cite{polyhedra, combinatorial_optimization, cunningham_marsh, matching_plummer}.


Call the polyhedron that contains all feasible points of (\ref{general_polyhedron}) or (\ref{bipartite_polyhedron}) as $\mathcal{M}$. We use the same name for both polyhedra, but from the context will be clear which one we are using, thus avoiding any confusion. For a comprehensive treatment of the various properties of $\mathcal{M}$, including a polynomial time separation oracle, we refer to \cite{matching_plummer}.



We can describe the set of all feasible solution of the constrained (Bounded Color) matching problem as follows:

\begin{eqnarray}
\label{bcm_polyhedron}
\mathcal{M}_c = \Bigg\{ y \in \{0,1\}^{E} :~~ y \in \mathcal{M} ~~ \bigwedge ~~ \sum_{e \in E_j} y_e \leq w_j, ~~\forall j  \in [k]    \Bigg\}
\end{eqnarray}

To judge the quality of a linear programming relaxation, and to explore the limits of linear programming techniques, the concept of \textit{integrality gap} (or integrality \textit{ratio}) has been introduced: informally a linear relaxation is strong when it does not allow a lot of ``cheating" with regard the objective function value over the original integral formulation. This is called \textit{integrality gap} and is defined as follows: Let us assume that we have an optimization problem with a set of valid instances $I$ and let $\mathcal{Z}$ the set of all feasible solutions for a particular instance $\in I$ defined by the integer formulation of the problem which we call it IP. Define $\textsf{opt}(IP) = \textsf{opt}(\mathcal{Z}) = \max_{z \in \mathcal{Z}} f(z)$ where $f(z)$ is the objective function value of the feasible solution $z$. As before, let $\mathcal{Z}'$ be the set of feasible points on the linear relaxation of IP (which we call it LP) and define $\textsf{opt}(LP)\textsf{opt}(\mathcal{Z}') = \max_{q \in \mathcal{Z}'} f(q)$. Then the integrality gap (or the \textit{integrality ratio}) of the relaxation of IP is

\begin{displaymath}
\sup_{i \in I} ~\frac{\textsf{opt}(LP)}{\textsf{opt}(IP)}
\end{displaymath}

Of course, the closer this quantity is to one, the better the quality of the formulation. An LP formulation with integrality gap of $\varrho$ implies that it is impossible to design an approximation algorithm with performance guarantee better than $\varrho$ using this particular formulation as upper/lower bounding schema for our discrete optimization problem.

Figure 1 shows that the integrality ratio of $\mathcal{M}_c$ in both the general and the bipartite case is at most 2 already for 3 color classes with bounds $w_j = 1$, thus we cannot hope to achieve a better than $\frac{1}{2}$ approximation using the natural LP relaxation of the Bounded Color Matching problem as defined by $\mathcal{M}_c$. The main contribution of this manuscript is to show that if we compromise a little and allow violation of the color constraints, we can design algorithms with better than $\frac{1}{2}$ approximation guarantee, thus ``beating" the natural barrier caused by the integrality ratio of the $\mathcal{M}_c$ polyhedron.

\begin{figure}[h!]
\label{example}
\centering
\includegraphics[scale=0.70]{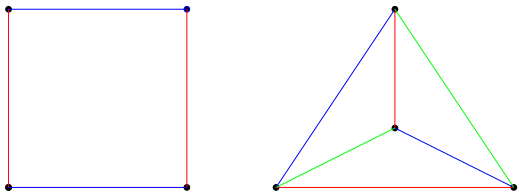}
\caption{In both cases we seek a maximum matching with at most 1 edge per color. Any optimal \textit{integral} solution has value 1 in both graphs. Observe that in the first case the all $\frac{1}{2}$ solution gives a solution of value 2 but every integral optimal solution has value 1. The same is true in the second graph which is not bipartite: by assigning the value $\frac{1}{3}$ to all the edges we get an optimal solution of value 2 (note that the all $\frac{1}{3}$ solution is \textit{not} a basic feasible solution). }
\end{figure}

Since we are interested in bi-criteria approximation algorithms we need to define formally what we mean by that:

\begin{definition}
An algorithm $\mathcal{A}$ for the Bounded Color Matching problem is an $((\alpha,\beta),(\gamma,\delta))$ bi-criteria approximation algorithm, for $\alpha, \beta,\gamma,\delta \in \mathbb{Q}^{\geq 0}$, if

\begin{enumerate}
\item $\mathcal{A}$ runs in polynomial time in the input size of the instance,
\item $\mathcal{A}$ returns a solution $sol$ for which $sol \geq \alpha \cdot opt + \beta$, where $opt$ is (an upper bound on) the optimal solution value and,
\item the solution that $\mathcal{A}$ returns has at most $\gamma \cdot w_j + \delta$ edges for every color class $C_j, j \in [k]$.
\end{enumerate}
\end{definition}

In the following lemma, we show that using linear programming techniques, it is impossible to design bi-criteria approximation algorithms with certain performance guarantees, even for very simple cases. In particular, we show  that it is impossible to obtain an additive error with the natural LP formulation  without violating the objective function value.

\begin{lemma}
It is impossible to design bi-criteria approximation algorithms for the BCM problem of the form $((1,0), (0, \zeta))$, for any additive value of $\zeta$ using the natural linear programming formulation of the problem.
\end{lemma}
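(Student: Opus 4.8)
The plan is to rule out the form $((1,0),(0,\zeta))$ by exhibiting, for each admissible constant $\zeta$, a single BCM instance on which the two requirements are mutually contradictory. Recall that such an algorithm must return a solution $sol$ with $sol \geq 1\cdot opt + 0 = opt$ (no loss whatsoever in the objective, where $opt$ is the value of the natural relaxation) while using at most $0\cdot w_j + \zeta = \zeta$ edges of every color. So it suffices to build an instance whose optimum is strictly larger than $\zeta$, yet in which \emph{every} matching obeying the additive color bound $\zeta$ has value at most $\zeta$. The conceptual heart of the matter is that with $\gamma=0$ the permitted number of edges per color does not scale with $w_j$, so by letting $w_j$ grow we can push $opt$ past any fixed slack $\zeta$.

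Concretely, fix $\zeta$ and let the instance consist of $t := \lfloor \zeta\rfloor + 1$ pairwise vertex-disjoint edges (a graph on $2t$ vertices that is itself a perfect matching), all assigned to a single color class $C_1$ with bound $w_1 = t$. I would first observe that this graph is bipartite and acyclic, so no blossom inequality is active and the polyhedra (\ref{bipartite_polyhedron}) and (\ref{general_polyhedron}) coincide: all degree constraints are satisfied by setting every $x_e = 1$, and the only nontrivial inequality is the color constraint $\sum_{e \in E_1} x_e \leq t$. Hence the relaxation has value $opt(LP) = t$, and this value is attained by the integral solution that selects all $t$ edges, so there is no integrality gap and $opt = t = \lfloor\zeta\rfloor + 1$.

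The contradiction is then immediate. Suppose an algorithm of the claimed form returned a matching $M$ on this instance. Its objective guarantee gives $|M| \geq opt = \lfloor\zeta\rfloor + 1$. But every edge of the instance has color $C_1$, so $M$ is monochromatic and the color guarantee forces $|M| = |M \cap E_1| \leq \zeta$; since $|M|$ is an integer this means $|M| \leq \lfloor\zeta\rfloor$. The two bounds $\lfloor\zeta\rfloor + 1 \leq |M| \leq \lfloor\zeta\rfloor$ are inconsistent, so no such algorithm exists. As this produces a valid instance for every $\zeta \in \mathbb{Q}^{\geq 0}$, the form $((1,0),(0,\zeta))$ is impossible for all additive values of $\zeta$, and since the instance is bipartite the impossibility holds for both polyhedra simultaneously.

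I do not expect any serious obstacle here; the only point that must be stated carefully is why the construction genuinely concerns the natural LP. Because the instance has \emph{no} integrality gap, the benchmark $opt$ equals both the LP value and the true optimum, so the requirement $sol \geq opt$ is a real constraint rather than an artifact of a weak relaxation, and the failure is driven purely by the additive-only color slack ($\gamma=0$). A multi-color or gap-exhibiting instance (such as the $K_4$-type gadget underlying Figure~\ref{example}) is unnecessary for this lemma, since a single color class already saturates the argument; the more delicate gap constructions are what one needs later to justify that $\gamma$ must in fact be bounded away from $1$.
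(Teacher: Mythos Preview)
Your argument is correct for the lemma exactly as stated, but it proceeds quite differently from the paper. The paper builds a path on $2n$ vertices with a single budget $\sum_{e\in R}x_e\le n/2$ on the $n$ ``odd'' edges; the LP optimum is the all-$\tfrac12$ vector with value $(2n-1)/2$, and the only integral matching reaching that value is the unique perfect matching, which uses all $n$ edges of $R$ and therefore overshoots any fixed additive slack $\zeta$ once $n$ is large. You instead exploit the fact that $\gamma=0$ makes the allowed number of edges per color the constant $\zeta$, independent of $w_j$, so a monochromatic perfect matching of size $\lfloor\zeta\rfloor+1$ with $w_1=\lfloor\zeta\rfloor+1$ already gives the contradiction---no integrality gap is used at all. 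Your route is shorter and entirely sufficient for the literal $((1,0),(0,\zeta))$ claim. The paper's construction, however, actually establishes more: because its obstruction comes from a genuine LP/IP gap, the same instance also rules out $((1,0),(1,\zeta))$, which is the content the surrounding text really needs (namely that one cannot match the LP value with only \emph{additive} violation $w_j+\zeta$). Your gap-free instance does not show this stronger fact, since on it a $((1,0),(1,0))$ algorithm trivially exists.
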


\begin{proof}Consider the following instance of the BCM problem: we are given a (bipartite) graph $G = (V,U,E)$ with bipartition $U,V$ on $2n$ vertices ($|V | = |U| = n$) which is actually a path i.e. $E = {(v_i, u_i) \cup (v_{i+1}, u_i)}_{i=1,...n}$. We have only one budget of the form $\sum_{e \in R} x_e \leq \frac{n}{2}$ where $R = {(v_i, u_i)}_{i=1,...n}$. We solve the linear program (\ref{bcm_polyhedron}) to obtain an optimal basic solution $x$. This basic feasible solution corresponds to the vector $x = (1/2, . . . , 1/2)^T$ . So, in this example, the number of fractional edges for the cardinality constraint is $n$ which is twice as much as the the bound of the constraint which proves that there are not always constraints that have œ``few" non-zero variables. On the other hand, observe that the optimal (fractional) solution has value of $\frac{2n-1}{2}$ since we have $2n-1$ edges (a path on $2n$ vertices) and each edge get value $\frac{1}{2}$. On the other hand, even after dropping the budget constraint by the edges defined by $R$, the optimal (integral) solution has value $\frac{n-2}{2}$ and so we cannot ``reach" the optimal fractional solution. In other words, we cannot hope to achieve any additive violation on the budget constraint, for any constant value, without violating the objective function.
\end{proof}

And so, we have to ``violate" the objective function value (i.e. settle for an approximate solution) if we wish to achieve a constant violation on the color bounds. Please relate this impossibility result with the result of \cite{DBLP:journals/algorithmica/Yuster12} where it is shown how to compute, via combinatorial methods, maximum matchings with at most an additive one violation of the color budgets, for two color classes.

\section{A Fast Greedy $1/3$ Approximation Algorithm}

In this section we consider the weighted variant of the bounded color matching problem where each edge $e$ has a profit $p_e \in \mathbb{Q}^+$. The goal  is to find a maximum profit matching that respects the color bounds.   Here we show how a $\mathcal{O}(m \log m)$ greedy procedure can easily derive a $\frac{1}{3}$ approximation for general weighted graphs.

\begin{algorithm}
\label{Greedy}
\caption{Greedy algorithm for Bounded Color Matching problem}

\noindent
\textit{\textbf{Input:}} Graph $G=(V,E)$, a color function $c: e \rightarrow [k]$, a profit function $p: e \rightarrow \mathbb{Q}^+$.

\noindent
\textit{\textbf{Output:}}A matching $M$ such that $|M \cap E_j| \leq w_j$, $\forall$ color class $j \in [k]$.

\begin{enumerate}
\item \textbf{initialize} $M := \emptyset$.
\item Sort all edges of the graph according to their profits in non-increasing order.
\item \textbf{while $E(G) \neq \emptyset$} \textbf{do}:
	\begin{itemize}
	\item[-] Pick the edge $e$ with the largest profit.
	\item[-] $M := M \cup \{ e\}$. $w_{c(e)} := w_{c(e)} - 1$. $E(G) := E(G) \setminus \{e'\in E: e\cap e'\neq \emptyset\}$.
	\item[-] \textbf{if} $w_{c(e)} = 0$ \textbf{then} remove all edges of the same color from the graph.
	\end{itemize}
\item \textbf{return} $M$.
\end{enumerate}
\end{algorithm}

To analyze the performance guarantee of the above simple procedure, we will use the notion of $\ell$-extendible systems due to  Mestre \cite{menstre_greedy}:

\begin{definition}
A \textbf{subset system} is a pair $(\mathcal{E}, \mathcal{L})$, where $\mathcal{E}$ is a finite ground set of elements and $\mathcal{L} \subseteq 2^{\mathcal{E}}$ with the property that $L \in \mathcal{L} \Rightarrow L' \in \mathcal{L}$, $\forall L' \subset L$. We say that $L_1 \in \mathcal{L}$ is an \textbf{extension} of $L_2 \in \mathcal{L}$ if $L_2 \subseteq L_1$.

A subset system $(\mathcal{E}, \mathcal{L})$ is said to be \textbf{$\ell$-extendible} if $ ~\forall ~X \in \mathcal{L}$, $x \notin X$ with $X \cup \{ x \} \in \mathcal{L}$ and for every extension $Y$ of $X$, $\exists ~Y' \subseteq Y \setminus X$ with $|Y'| \leq \ell$ such that $Y \setminus Y' \cup \{ x \} \in \mathcal{L}$.
\end{definition}

What is helpful, is the fact that the greedy algorithm applied to $\ell$-extendible systems provides a $\frac{1}{\ell}$ factor approximation. The following result is from \cite{menstre_greedy}:

\begin{theorem}
Let $(\mathcal{E}, \mathcal{L})$ be an $\ell$-extendible system. Then the Greedy algorithm applied to such systems provides an $\frac{1}{\ell}$ approximation algorithm for the optimization problem for any weight (or profit) function $p$.
\end{theorem}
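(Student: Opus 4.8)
The plan is to prove the guarantee by a charging argument comparing the greedy solution $G = \{g_1, \dots, g_t\}$, indexed in the order its elements are added (so that $p(g_1) \ge p(g_2) \ge \cdots$), with a fixed optimal independent set $O \in \mathcal{L}$. The goal is to construct a map assigning every element of $O$ to some element of $G$ so that (i) each $g_i$ absorbs the charge of at most $\ell$ elements of $O$, and (ii) whenever $o \in O$ is charged to $g_i$ one has $p(o) \le p(g_i)$. Once such a map exists, nonnegativity of the profits yields $\textsf{opt} = p(O) = \sum_i \sum_{o \mapsto g_i} p(o) \le \sum_i \ell\, p(g_i) = \ell\, p(G)$, which is exactly $p(G) \ge \frac{1}{\ell}\textsf{opt}$.

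To build the map I would process the greedy picks $g_1, g_2, \ldots$ in order while maintaining a shrinking ``remnant'' $O_0 \supseteq O_1 \supseteq \cdots$ of the optimal set, starting from $O_0 = O$, together with the invariant that $\{g_1,\dots,g_i\} \cup O_i \in \mathcal{L}$ (which holds at $i=0$ since $O \in \mathcal{L}$). At step $i$, if $g_i \in O_{i-1}$ I set $O_i = O_{i-1}\setminus\{g_i\}$ and charge $g_i$ to itself; the invariant is preserved because the underlying set is unchanged. If $g_i \notin O_{i-1}$, I apply $\ell$-extendibility with $X = \{g_1,\dots,g_{i-1}\}$, new element $x = g_i$ (a legal choice, since greedy added it, so $X\cup\{g_i\}\in\mathcal{L}$), and extension $Y = X \cup O_{i-1}$ (which lies in $\mathcal{L}$ by the invariant). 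This produces a set $Y' \subseteq Y\setminus X = O_{i-1}$ with $|Y'|\le \ell$ and $(Y\setminus Y')\cup\{g_i\}\in\mathcal{L}$; I then set $O_i = O_{i-1}\setminus Y'$ and charge each element of $Y'$ to $g_i$. In both cases $g_i$ absorbs at most $\ell$ elements, so (i) holds, and a short argument shows the final remnant $O_t$ is empty: any surviving $o$ would, by downward closure of $\mathcal{L}$, satisfy $G\cup\{o\}\in\mathcal{L}$, contradicting that greedy considered every element and is therefore maximal. Hence every element of $O$ is charged exactly once.

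The main obstacle is condition (ii), the weight domination $p(o)\le p(g_i)$ for the charges created in the exchange case. I would establish it by contradiction: if $p(o) > p(g_i)$, then $o$ is considered by greedy strictly before $g_i$. At that moment the current greedy set $G'$ is contained in $\{g_1,\dots,g_{i-1}\}$, and since $o \in O_{i-1}$ the invariant together with downward closure forces $G'\cup\{o\}\in\mathcal{L}$; hence greedy would have added $o$, placing $o$ in $G$ as some $g_j$ with $j<i$. But then $o$ was already removed from the remnant at step $j$ (the ``$g_i\in O_{i-1}$'' case), so $o\notin O_{i-1}$, contradicting $o\in Y'\subseteq O_{i-1}$. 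Thus $o$ is considered no earlier than $g_i$, and the sorted order gives $p(o)\le p(g_i)$. The only bookkeeping to keep straight is the tie-breaking rule in the sort and the monotonicity $O_{i-1}\subseteq O_j$ for $j\le i-1$, both routine; with (i) and (ii) in hand, the summation displayed above closes the proof.
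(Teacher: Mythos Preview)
The paper does not supply its own proof of this theorem; it simply quotes the result from Mestre~\cite{menstre_greedy} and moves on. Your argument is correct and is essentially the standard exchange/charging proof of this fact: maintain a shrinking copy of the optimum that stays independent together with the current greedy prefix, use $\ell$-extendibility at each greedy pick to evict at most $\ell$ optimal elements, and observe that each evicted element has weight at most that of the greedy element that evicted it. The invariant check, the maximality argument showing $O_t=\emptyset$, and the weight-domination argument are all sound (one small wording quibble: your penultimate sentence ``Thus $o$ is considered no earlier than $g_i$'' is not quite what the contradiction establishes---what you actually conclude is directly $p(o)\le p(g_i)$---but the logic is fine).
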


\begin{lemma}
The subset system associated with the Bounded Color Matching problem is 3-extendible.
\end{lemma}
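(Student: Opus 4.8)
The plan is to instantiate the $\ell$-extendible framework with the obvious subset system for the problem and then carry out a short case analysis. I would take the ground set to be $\mathcal{E} = E(G)$ and let $\mathcal{L}$ be the family of all \emph{feasible bounded color matchings}, i.e.\ $L \in \mathcal{L}$ iff $L$ is a matching and $|L \cap E_j| \leq w_j$ for every $j \in [k]$. This is a subset system because deleting edges from a matching keeps it a matching and can only decrease each quantity $|L \cap E_j|$, so downward closure is immediate.

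For the extendibility property, fix $X \in \mathcal{L}$ and an edge $x = \{u,v\} \notin X$ with $X \cup \{x\} \in \mathcal{L}$, and let $Y \supseteq X$ be any extension in $\mathcal{L}$. If $x \in Y$ I take $Y' = \emptyset$ and there is nothing to prove, so assume $x \notin Y$. The only reasons adding $x$ to $Y$ could leave $\mathcal{L}$ are (i) a conflict of the matching constraint at $u$ or at $v$, and (ii) a violation of the color bound of $C_j := c(x)$. For (i), since $Y$ is a matching there is at most one edge $e_u \in Y$ incident to $u$ and at most one edge $e_v \in Y$ incident to $v$; as $G$ is simple and $x \notin Y$, these two edges (when they exist) are distinct, and because $X \cup \{x\}$ is already a matching neither $e_u$ nor $e_v$ lies in $X$, so both belong to $Y \setminus X$. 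I put $e_u, e_v$ into $Y'$.

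The delicate point is (ii), controlling the color count. After deleting $e_u, e_v$ and inserting $x$, the number of color-$C_j$ edges is $|Y \cap E_j| - |\{e_u, e_v\} \cap E_j| + 1$. Here I would use the hypothesis $X \cup \{x\} \in \mathcal{L}$, which forces $|X \cap E_j| \leq w_j - 1$. If $e_u$ or $e_v$ already carries color $C_j$, or if $|Y \cap E_j| < w_j$, the displayed count is at most $w_j$ and $Y' = \{e_u, e_v\}$ works, using only two deletions. Otherwise $|Y \cap E_j| = w_j$ while $\{e_u, e_v\} \cap E_j = \emptyset$; combined with $X \subseteq Y$ and $|X \cap E_j| \leq w_j - 1$ this guarantees at least one edge $e_c$ of color $C_j$ in $(Y \cap E_j) \setminus X$, necessarily distinct from $e_u, e_v$, which I add to $Y'$. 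Removing $e_c$ brings the color count back to $w_j$.

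Finally I would check that $(Y \setminus Y') \cup \{x\}$ is genuinely feasible: deletions cannot create matching conflicts nor violate any color bound, and after removing the (at most one) edge at each of $u, v$ the inserted edge $x$ meets no other edge, so the result is a matching; the bound for $C_j$ holds by construction and all other color bounds only decreased. Since $|Y'| \leq 3$ in every case, the system is $3$-extendible. The main obstacle is precisely the interaction in (ii): one must argue that whenever the two matching-conflict deletions do not already carry color $C_j$, the slack guaranteed by $X \cup \{x\} \in \mathcal{L}$ produces a \emph{third, distinct} color-$C_j$ edge in $Y \setminus X$ to delete, which is what pins the constant at $3$ rather than $2$.
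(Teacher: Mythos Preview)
Your proof is correct and follows essentially the same approach as the paper's: identify the (at most) two edges of $Y$ meeting the endpoints of $x$ and, if necessary, one further edge of the right color to repair the budget, then verify feasibility. If anything you are more careful than the paper, which does not explicitly argue that the three removed edges lie in $Y\setminus X$ nor separate out the subcases where two deletions already suffice.
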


\begin{proof}
Let $M$ and $M'$ be feasible solutions such that $M'$ is an extension of $M$ (i.e. $M \subseteq M'$). Let $M$ be such that $M \cup \{ e \}$ is still feasible, for some edge $e = \{ u,v \} \in E$ with color $j$ such that $e \notin M'$.
By the above property that $M \cup \{ e \}$ is feasible, it is easy to see that $\deg_M (u) = \deg_M (v) = 0$ and $|M \cap E_j| < w_j$. Now consider the extension of $M$, namely $M'$. We can find at most 3 edges $ e_1, e_2, e_3 \in M'$ such that $u \in e_1$, $v \in e_2$ and $c(e_3) = c(e)$. Observe that we can find many edges $e_3$. But \textit{any} such edge would suffice. The point is that the addition of $e$ in $M'$ would potentially lead to at most three ``conflicting" edges (the addition of $e$ would cause the removal of at most three edges in order the new solution to remain feasible).  Consider the new solution $Z = M' \setminus \{ e_1, e_2, e_3 \} \cup e$. This is still a feasible solution for the Bounded Color Matching problem with $|\{e_1, e_2, e_3  \}| \leq 3 = \ell$, and so the system characterizing our problem is 3-extendible. Observe that we use inequality in the $|\{e_1, e_2, e_3  \}| \leq 3$ because it might be, for example, that $e_1 = e_3$.
\end{proof}

\begin{corollary}
Algorithm 1 is an $\mathcal{O}(m\log m)$ time $\frac{1}{3}$-approximation algorithm for the weighted version of the Bounded Color Matching problem in general graphs.
\end{corollary}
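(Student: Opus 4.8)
The final statement to prove is the Corollary asserting that Algorithm 1 is an $\mathcal{O}(m\log m)$ time $\frac{1}{3}$-approximation for the weighted Bounded Color Matching problem. The plan is to combine the two immediately preceding results: the cited Theorem stating that the Greedy algorithm on an $\ell$-extendible system yields a $\frac{1}{\ell}$ approximation, and the Lemma establishing that the subset system associated with the Bounded Color Matching problem is $3$-extendible. The argument therefore has two logically distinct parts — a correctness/approximation part and a running-time part — and I would handle them in that order.

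For the approximation guarantee, the first step is to verify that the set of feasible solutions of the Bounded Color Matching problem genuinely forms a subset system $(\mathcal{E}, \mathcal{L})$ in the sense of the Definition: I would take $\mathcal{E} = E(G)$ as the ground set and let $\mathcal{L}$ be the collection of all feasible matchings, i.e. all edge subsets that are simultaneously matchings (satisfy $\sum_{e\in\delta(v)} y_e \le 1$) and respect every color bound ($|M\cap E_j|\le w_j$). The downward-closure property $L\in\mathcal{L}\Rightarrow L'\in\mathcal{L}$ for $L'\subset L$ holds trivially, since removing edges from a feasible matching can only decrease vertex degrees and color counts. Once this is checked, the $3$-extendibility is exactly the content of the preceding Lemma, so $\ell = 3$. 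Applying the cited Theorem with $\ell = 3$ immediately yields that Greedy returns a matching of profit at least $\frac{1}{3}\,\mathrm{opt}$, which is precisely the claimed approximation factor. I should also note that Algorithm 1 is literally the greedy procedure of the framework: it processes edges in non-increasing profit order and adds an edge whenever doing so keeps the current set in $\mathcal{L}$ (the degree constraints are enforced by deleting incident edges, and the color constraints by deleting a color class once its bound is exhausted).

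For the running time, I would analyze the dominant costs of Algorithm 1 directly. Sorting the $m$ edges by profit (step 2) costs $\mathcal{O}(m\log m)$. The main loop performs at most $m$ iterations in total, and each edge is examined and deleted at most once across the entire execution — picking the next maximum-profit edge from the sorted list is amortized $\mathcal{O}(1)$, and the deletions in the line $E(G):=E(G)\setminus\{e':e\cap e'\neq\emptyset\}$ together with the color-class removal touch each edge at most once, contributing $\mathcal{O}(m)$ total (with suitable adjacency and color-bucket data structures and a decrement of $w_{c(e)}$ in $\mathcal{O}(1)$). Hence the whole loop is $\mathcal{O}(m)$ and the overall bound is dominated by the sort, giving $\mathcal{O}(m\log m)$.

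The genuinely substantive work has already been discharged by the preceding Lemma and Theorem, so the Corollary is essentially a straightforward composition. The only mild obstacle I anticipate is the bookkeeping in the running-time argument: one must argue the amortized $\mathcal{O}(m)$ bound for the loop carefully, making sure that the edge-deletion operations and the color-exhaustion deletions are each charged to distinct edges so that no edge is processed more than a constant number of times. This requires specifying the data structures (edge lists per vertex and per color, plus a flag marking deleted edges) but involves no real difficulty; the approximation bound is then immediate from $\ell = 3$.
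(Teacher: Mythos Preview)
Your proposal is correct and follows exactly the approach the paper intends: the corollary is stated without proof because it is meant to be immediate from the preceding Theorem (greedy is $1/\ell$-approximate on $\ell$-extendible systems) and Lemma (the BCM system is $3$-extendible), together with the obvious $\mathcal{O}(m\log m)$ sorting bound. Your additional care in verifying downward closure and in amortizing the loop to $\mathcal{O}(m)$ simply fills in routine details that the paper leaves implicit.
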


\noindent
Figure 2 shows that this bound is essentially tight.

\begin{figure}[h!]\label{example_greedy}
\centering
\includegraphics[scale=0.80]{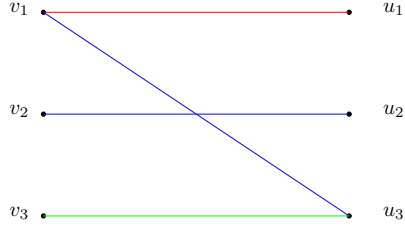}
\caption{An example of worst case behavior of the Greedy procedure. Assume that all the color bounds are set to 1. Edges $\{v_1,u_3\}$ and $\{v_2,u_2\}$ are blue while $\{v_1,u_1\}$ is red and $\{v_3,u_3\}$ is green. Selecting the (blue) edge $(v_1,u_3)$ will result to a solution of value 1. On the other hand the optimal solution consists of selecting the edges $(u_1,v_1), (u_2,v_2), (u_3,v_3)$.}
\end{figure}

\section{Combinatorial Properties of the $\mathcal{M}_c$ Polyhedron}

In this subsection we will  prove some  interesting combinatorial properties of extreme point solutions of the polyhedron $\mathcal{M}_c$. Later on, we will take advantage of these properties to devise our approximation algorithms.  We first need some preliminary definitions.

\begin{definition}
Let $E' \subseteq E$ be a subset of the edges of the graph. Then, we define the \textbf{characteristic vector} of $E'$  to be the binary vector $\chi_{E'} \in \{ 0,1 \}^{E}$ such that $\chi_{E'}(e) = 1 \Leftrightarrow e \in E'$ i.e. the $i$-th component of $\chi_{E'}$ is 1, if the $i$-th edge belongs to $E'$ and zero otherwise.
\end{definition}

\begin{definition}
Let $y$ be a real-valued vector in an $n$-dimensional space. Define the \textbf{support} of $y$ to be the indices of all the non-zero components of $y$ i.e. $support(y) = \{ i \in [n]: y_i \neq 0  \}$.
\end{definition}

\begin{definition}
A family $\mathcal{L}$ of subsets of some universe $U$ is called \textbf{laminar} if it is not intersecting i.e. for any two subsets $L_1,L_2 \in \mathcal{L}$ either $L_1 \subseteq L_2$, or $L_2 \subseteq L_1$, or $L_1 \cap L_2 = \emptyset$.
\end{definition}

Now, if we solve (to optimality) the relaxation of the linear program defined by $\mathcal{M}_c$, we will obtain a basic feasible solution\footnote{We note that basic feasible solutions and extreme point solutions are equivalent concepts} (in fact, an \textit{optimal} basic solution) $x^*$. We can characterize this basic solution $x^*$ as follows (for general graphs):

\begin{lemma}\label{rank_lemma}
Let $x^*$ be an optimal basic solution for the LP relaxation (relaxing the constraints $x_e \in \{ 0,1 \}$ with $x_e \in [0,1]$) of (\ref{bcm_polyhedron}) (with the blossom inequalities) such that $x^*_e > 0$ $\forall e \in E$. Then, there exist $\mathcal{F}$ $\subseteq V$, a family $\mathcal{L}$ $\subseteq 2^V$ of odd cardinality subsets of vertices and $\mathcal{Q}$ $\subseteq [k]$ such that

\begin{enumerate}
\item $\sum_{e \in \delta(v)} x^*_e = 1$, $\forall v \in \mathcal{F}$.
\item $\sum_{e \in E(S)} x^*_e = \frac{|S|-1}{2}$, $\forall S \in \mathcal{L}$.
\item $\sum_{e \in E_j}x^*_e = w_j$, $\forall j \in \mathcal{Q}$.
\item $\{\chi_{\delta(v)}\}_{v \in \mathcal{F}}$, $\{\chi_{E(S)}\}_{S \in \mathcal{L}}$ and $\{ \chi_{E_j} \}_{j\in \mathcal{Q}}$ are all linearly independent, i.e. the linear constraints corresponding to $\mathcal{F}$, $\mathcal{L}$ and $\mathcal{Q}$ are al linearly independent.
\item $|E| = |\mathcal{F}|+ |\mathcal{L}| + |\mathcal{Q}|$ i.e. the number of edges (non-zero variables) is equal to the number of tight, linearly independent constraints.
\end{enumerate}
\end{lemma}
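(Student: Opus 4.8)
The plan is to invoke the standard polyhedral characterization of basic feasible solutions (vertices) of a polytope and then specialize it to the constraint system defining $\mathcal{M}_c$. Recall that a point $x^*$ in a polyhedron $P = \{x : Ax \le b,\, x \ge 0\}$ is a basic feasible solution if and only if the set of constraints that are tight at $x^*$ contains a subsystem of full rank equal to the dimension of the ambient space, i.e. the tight constraints span $\mathbb{R}^{|E|}$. Since by hypothesis $x^*_e > 0$ for every $e \in E$, none of the non-negativity constraints $x_e \ge 0$ are tight, so every tight constraint must come from one of the three structural families: the degree constraints $\sum_{e \in \delta(v)} x^*_e \le 1$, the blossom (odd-set) constraints $\sum_{e \in E(S)} x^*_e \le \frac{|S|-1}{2}$, or the color-budget constraints $\sum_{e \in E_j} x^*_e \le w_j$.

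First I would define $\mathcal{F}$, $\mathcal{L}$, $\mathcal{Q}$ to be, respectively, the vertices, odd subsets, and color classes whose corresponding constraint is satisfied with equality at $x^*$; this immediately gives items (1), (2), and (3) as definitions of membership. The substance of the lemma is items (4) and (5). For (4), the collection of all tight-constraint characteristic vectors $\{\chi_{\delta(v)}\}$, $\{\chi_{E(S)}\}$, $\{\chi_{E_j}\}$ spans a subspace of $\mathbb{R}^{|E|}$; because $x^*$ is a \emph{basic} solution with full support, this span must be all of $\mathbb{R}^{|E|}$, which has dimension $|E|$. From a spanning set of size $|\mathcal{F}| + |\mathcal{L}| + |\mathcal{Q}|$ (or larger) one can extract a maximal linearly independent subfamily; I would then replace $\mathcal{F}, \mathcal{L}, \mathcal{Q}$ by the indices surviving in this maximal independent subfamily, discarding redundant tight constraints. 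This pruning preserves (1)–(3) (the surviving constraints are still tight) and secures (4) by construction. Since the pruned family is both linearly independent and spanning in $\mathbb{R}^{|E|}$, it is a basis, so its cardinality equals $|E|$, giving (5): $|E| = |\mathcal{F}| + |\mathcal{L}| + |\mathcal{Q}|$.

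The one point requiring care—and the main obstacle—is the \emph{uncrossing} argument needed to guarantee that the laminar structure claimed for $\mathcal{L}$ can actually be achieved simultaneously with linear independence. The degree and color constraints cause no difficulty, but the odd-set blossom constraints can a priori form an intersecting (crossing) family, and a naive maximal-independent-subfamily extraction need not yield a laminar $\mathcal{L}$. The standard remedy is the submodularity/supermodularity of the function $S \mapsto \sum_{e \in E(S)} x^*_e$ together with the parity of $|S|$: if two tight odd sets $S_1, S_2$ cross, one shows using the tightness equalities and the odd-cardinality condition that $S_1 \cap S_2$ and $S_1 \cup S_2$ (or $S_1 \setminus S_2$ and $S_2 \setminus S_1$) are again tight odd sets, and that $\chi_{E(S_1)} + \chi_{E(S_2)}$ lies in the span of the characteristic vectors of the uncrossed sets. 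Iterating this uncrossing replaces any crossing pair by laminar sets without decreasing the span, so a maximal laminar independent subfamily still spans everything the original blossom constraints did. I would therefore carry out the extraction of item (4) \emph{after} first uncrossing $\mathcal{L}$ into a laminar family, and I expect the verification that uncrossing preserves both tightness and the spanned subspace to be the technically delicate step, though it follows the by-now-classical pattern used in iterative-rounding analyses of matching polytopes.
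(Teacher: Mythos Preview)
Your proposal is correct and follows essentially the same approach as the paper: invoke the standard vertex characterization (the paper cites this as Theorem~\ref{tight} from Schrijver), observe that the assumption $x^*_e>0$ rules out any tight non-negativity constraints, and then extract from the tight degree, blossom, and color constraints a linearly independent subfamily of size $|E|$. The one difference is scope: the laminarity of $\mathcal{L}$ that you spend your final paragraph on is \emph{not} part of Lemma~\ref{rank_lemma} in the paper---it is stated separately as the next lemma and proved by reference to standard uncrossing arguments in the literature---so for this statement you have done more than is required.
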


The lemma follows by basic properties of the basic feasible solutions:

\begin{theorem}[\cite{schrijver}]\label{tight}
Let $P = \{ x~|~ Ax \leq b   \}$ where $x \in \mathbb{R}^n, A \in \mathbb{R}^{m \times n}$ and $b \in \mathbb{R}^m$ be a polyhedron in $\mathbb{R}^n$. Then, a point $z \in \mathbb{R}^n$ is a vertex of the polyhedron $P$ if and only if $\textrm{rank}(A_{|z}) = n$ where $A_{|z}$ is the sub matrix of $A$ consisting of those rows $i$ such that $A_i z = b_i$.
\end{theorem}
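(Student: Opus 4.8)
The plan is to prove both directions of the equivalence by reasoning directly about perturbations of $z$ along the null space of the active constraints. Throughout, write $I(z) = \{\, i : A_i z = b_i \,\}$ for the set of \emph{active} (tight) rows at $z$, so that $A_{|z}$ is the submatrix with rows $\{A_i\}_{i \in I(z)}$, and let $b_{|z}$ denote the corresponding subvector of right-hand sides. I will use the standard characterization of a vertex: $z \in P$ is a vertex precisely when it cannot be written as $z = \lambda y + (1-\lambda) w$ with $y, w \in P$ distinct and $\lambda \in (0,1)$, equivalently when $z$ is not the interior point of any nondegenerate segment contained in $P$.

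For the direction $\text{rank}(A_{|z}) = n \Rightarrow z$ is a vertex, I would argue by contradiction. Suppose $z = \lambda y + (1-\lambda) w$ for distinct $y, w \in P$ and some $\lambda \in (0,1)$. Fix any active row $i \in I(z)$. Then $b_i = A_i z = \lambda A_i y + (1-\lambda) A_i w$, while feasibility gives $A_i y \le b_i$ and $A_i w \le b_i$. A strict convex combination of two numbers each at most $b_i$ can equal $b_i$ only if both equal $b_i$, so $A_i y = A_i w = b_i$ for every $i \in I(z)$. Hence both $y$ and $w$ solve the linear system $A_{|z} x = b_{|z}$. But $\text{rank}(A_{|z}) = n$ means this system has a unique solution, forcing $y = z = w$ and contradicting $y \neq w$. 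Therefore $z$ is a vertex.

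For the converse $z$ is a vertex $\Rightarrow \text{rank}(A_{|z}) = n$, I would again use the contrapositive. Assume $\text{rank}(A_{|z}) < n$. Then the null space of $A_{|z}$ is nontrivial, so there is a nonzero vector $d$ with $A_{|z}\, d = 0$, i.e. $A_i d = 0$ for all active rows $i \in I(z)$. For these active rows, $A_i(z \pm \varepsilon d) = A_i z \pm \varepsilon A_i d = b_i$ for every $\varepsilon$, so they remain satisfied (with equality). For each inactive row $i \notin I(z)$ we have strict slack $A_i z < b_i$; since there are only finitely many such rows, I can choose one $\varepsilon > 0$ small enough that $A_i(z \pm \varepsilon d) < b_i$ holds simultaneously for all of them. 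Then $z + \varepsilon d \in P$ and $z - \varepsilon d \in P$ are two distinct points of $P$ with $z = \tfrac{1}{2}(z+\varepsilon d) + \tfrac{1}{2}(z - \varepsilon d)$, so $z$ is not a vertex.

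The only step requiring any care is the uniform choice of $\varepsilon$ for the inactive constraints in the converse direction, but this is entirely routine: finiteness of the constraint set together with strict inequality on each inactive row guarantees a common positive threshold. I expect this to be the main (very mild) technical point; everything else is immediate from linear algebra and the defining feasibility inequalities. Since the statement is quoted from Schrijver's text \cite{schrijver}, the argument above simply records the standard proof and can be kept short.
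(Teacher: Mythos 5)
Your proof is correct: the paper itself gives no proof of this theorem, quoting it directly from Schrijver's text \cite{schrijver}, and your two-direction argument (active constraints forced to equality under any strict convex combination, giving uniqueness when $\textrm{rank}(A_{|z}) = n$; a null-space perturbation $z \pm \varepsilon d$ with a uniform $\varepsilon$ over the finitely many slack rows for the converse) is exactly the standard textbook proof. The only implicit point worth making explicit is that the theorem tacitly assumes $z \in P$, which your vertex definition already builds in.
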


Indeed, we can form a basic feasible solution by selecting $|E|$ linearly independent constraints from our linear program, set them to equality, and solve the linear system. The last item in the lemma simply says that the number of non-zero variables (which corresponds to edges in the residual graph) is simply the number of linear independent constraints set to equality when we obtain the linear system. The assumption that $x^*_e >0$ implies that all constraints that we set to equality must come from the first three types of constraints (vertex, blossom and color constraints), but not non-negativity constraints.

From now on, when we refer to a \textit{tight vertex} $v$ we will mean a vertex such that the constraint corresponding to that vertex is tight i.e. $\sum_{e \in \delta(v)} x_e = 1$. Similar for \textit{tight color class} (i.e. a color class $j$ such that $\sum_{e \in E_j}x_e = w_j$) and tight odd-cardinality vertex sets (i.e. subsets $S$ of the vertices of odd cardinality such that $\sum_{e \in E(S)} x_e = \frac{|S|-1}{2}$). Observe that, in general, not all tight vertices belong to $\mathcal{F}$ and not all tight colors belong to $\mathcal{Q}$ (the same for all tight odd cardinality subsets of vertices). But every element of $\mathcal{F} \cup \mathcal{Q} \cup \mathcal{L}$ is tight.

\medskip
A most important result concerning the family $\mathcal{L}$ of odd cardinality subsets of vertices is that it can be taken to be \textit{laminar} (non-intersecting).

\begin{lemma}[\cite{polyhedra},\cite{combinatorial_optimization},\cite{cunningham_marsh}, \cite{matching_plummer}]
The family $\mathcal{L}$ of odd cardinality subset of the vertex set as defined above can be taken to be laminar.
\end{lemma}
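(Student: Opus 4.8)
The plan is to prove the statement via the standard \emph{uncrossing} technique. Say that two sets $S_1,S_2$ \emph{cross} if all three of $S_1\cap S_2$, $S_1\setminus S_2$ and $S_2\setminus S_1$ are non-empty; a family is laminar precisely when no two of its members cross. The goal is to show that, starting from the collection of all tight odd sets whose blossom constraint is active at $x^*$, one can extract a laminar subfamily $\mathcal{L}$ whose characteristic vectors $\{\chi_{E(S)}\}_{S\in\mathcal{L}}$ span the same space (modulo the vertex- and color-constraint rows) as the vectors of \emph{all} tight odd sets, while remaining linearly independent. Since replacing the odd-set rows by such an $\mathcal{L}$ leaves the rank of the tight system unchanged, Theorem \ref{tight} guarantees that $x^*$ is still the unique solution of the chosen tight constraints, so all conclusions of Lemma \ref{rank_lemma} persist with $\mathcal{L}$ laminar.

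First I would record the one algebraic fact that drives everything: the set function $S\mapsto x^*(E(S)):=\sum_{e\in E(S)}x^*_e$ is supermodular, and a direct edge-counting argument gives, for any two vertex sets $S_1,S_2$,
\begin{equation}
x^*(E(S_1))+x^*(E(S_2))+\sum_{e\in\gamma(S_1,S_2)}x^*_e = x^*(E(S_1\cap S_2))+x^*(E(S_1\cup S_2)),
\end{equation}
where $\gamma(S_1,S_2)$ is the set of edges with one endpoint in $S_1\setminus S_2$ and the other in $S_2\setminus S_1$. When $S_1,S_2$ are tight odd sets whose intersection (equivalently, union) is again odd, the parities match and the valid blossom bounds satisfy $f(S_1\cap S_2)+f(S_1\cup S_2)=f(S_1)+f(S_2)$ with $f(S)=\frac{|S|-1}{2}$. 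Combining this with the displayed identity and the blossom inequalities forces every inequality to hold with equality: $S_1\cap S_2$ and $S_1\cup S_2$ are themselves tight odd sets, the crossing sum vanishes, and—crucially using the hypothesis $x^*_e>0$ for all $e$ from Lemma \ref{rank_lemma}—there are no $\gamma(S_1,S_2)$ edges at all. This upgrades the scalar identity to the \emph{vector} identity $\chi_{E(S_1)}+\chi_{E(S_2)}=\chi_{E(S_1\cap S_2)}+\chi_{E(S_1\cup S_2)}$, which is exactly what permits substituting $\{S_1\cap S_2,\,S_1\cup S_2\}$ for $\{S_1,S_2\}$ inside the span.

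With this in hand I would run the usual maximality argument. Fix a laminar family $\mathcal{L}$ of tight odd sets with $\{\chi_{E(S)}\}_{S\in\mathcal{L}}$ linearly independent from each other and from the vertex/color rows, chosen maximal under inclusion with this property. If some tight odd set $S$ had $\chi_{E(S)}$ outside the current span, pick one crossing as few members of $\mathcal{L}$ as possible. If it crosses none, adjoining it keeps the family laminar and independent, contradicting maximality. If it crosses some $S'\in\mathcal{L}$, the vector identity rewrites $\chi_{E(S)}=\chi_{E(S\cap S')}+\chi_{E(S\cup S')}-\chi_{E(S')}$; since $\chi_{E(S')}$ lies in the span but $\chi_{E(S)}$ does not, at least one of the two uncrossed sets is again a tight odd set outside the span, and by the standard fact that uncrossing against a member of a laminar family reduces the number of crossings, it crosses strictly fewer members of $\mathcal{L}$ than $S$—contradicting the minimal choice. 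Hence $\mathcal{L}$ already spans all tight odd-set rows, as required.

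The step I expect to be the main obstacle is the \emph{parity bookkeeping}: the case where two crossing tight odd sets have \emph{even} intersection and union, so that $S_1\cap S_2$ and $S_1\cup S_2$ are not legal blossom sets and the identity of the second paragraph is unavailable. Here one must instead work with the differences $S_1\setminus S_2$ and $S_2\setminus S_1$, which are \emph{both odd} in this regime, and re-derive the corresponding tightness and a valid linear dependence among the characteristic vectors using the positivity $x^*_e>0$ together with the tight degree constraints at the interface vertices. Verifying that this alternative uncrossing both preserves tightness of the replacements and produces the dependence needed to carry out the maximality argument is the delicate part; once it is in place the argument above applies verbatim. This is precisely the analysis worked out in \cite{polyhedra,combinatorial_optimization,cunningham_marsh,matching_plummer}.
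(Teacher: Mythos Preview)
Your proposal is correct and is precisely the ``standard uncrossing'' argument the paper alludes to; note that the paper does not actually give a proof of this lemma but simply states that it ``uses standard uncrossing techniques and can be found in the above mentioned references.'' Your sketch supplies considerably more detail than the paper does, and you correctly identify the even-parity case (where $|S_1\cap S_2|$ and $|S_1\cup S_2|$ are even and one must uncross via $S_1\setminus S_2$, $S_2\setminus S_1$ instead) as the delicate step---which is exactly the part the cited references handle.
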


The proof of the above argument, uses standard uncrossing techniques and can be found in the above mentioned references. A useful observation, that we will need shortly, is about the cardinality of $\mathcal{L}$,  $|\mathcal{L}|$:

\begin{lemma} \label{laminar}
Let $\mathcal{L}$ be a laminar collection of odd cardinality subsets of a universe of elements $U$, such that $|L| \geq 3$, $\forall L \in \mathcal{L}$. Then $|\mathcal{L}| \leq \lfloor \frac{|U|-1}{2} \rfloor$.
\end{lemma}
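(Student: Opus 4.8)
The plan is to represent the laminar family $\mathcal{L}$ as a forest and run an induction on its tree structure. I would build the forest by making each set $L \in \mathcal{L}$ a node whose parent is the smallest member of $\mathcal{L}$ that strictly contains it; the roots are then the inclusion-maximal sets, and distinct root-subtrees live over disjoint portions of $U$. Since the sets hanging under two different roots are pairwise disjoint, the whole statement reduces to the single-tree case together with a summation: if every rooted subtree with root $R$ satisfies $|R| \ge 2|\mathcal{L}_R| + 1$, where $\mathcal{L}_R$ denotes the subfamily of sets contained in $R$, then summing over the $r \ge 1$ roots gives $|U| \ge |\bigcup \mathcal{L}| = \sum_R |R| \ge 2|\mathcal{L}| + r \ge 2|\mathcal{L}| + 1$. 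This rearranges to $|\mathcal{L}| \le (|U|-1)/2$, and hence to the claimed floor bound since $|\mathcal{L}|$ is an integer.

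For the single-tree claim $|R| \ge 2|\mathcal{L}_R| + 1$ I would induct on the height of the tree. For a node $S$ with children $C_1, \dots, C_c$, I introduce its \emph{privately owned} elements $p(S) := |S| - \sum_i |C_i| \ge 0$, i.e.\ the elements of $S$ lying in no child. Writing $|R| = \sum_i |C_i| + p(R)$, applying the inductive hypothesis $|C_i| \ge 2|\mathcal{L}_{C_i}| + 1$ to each child subtree, and using $|\mathcal{L}_R| = 1 + \sum_i |\mathcal{L}_{C_i}|$, the target inequality collapses to the purely local requirement $c + p(R) \ge 3$.

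Verifying $c + p(R) \ge 3$ is the technical heart of the argument, and I would settle it by a short case analysis on $c$, driven by the two hypotheses that every set is odd and has size at least $3$. When $c = 0$ the node is a leaf and $p(R) = |R| \ge 3$; when $c \ge 3$ the inequality is immediate from $p(R) \ge 0$. The delicate cases are $c = 1$ and $c = 2$, which is where I expect the main obstacle to lie: here $p(R)$ could a priori be small, so I must exploit parity. For $c = 2$, the quantity $|R|$ is odd while $|C_1| + |C_2|$ is even, forcing $p(R)$ to be odd, hence $p(R) \ge 1$ and $c + p(R) \ge 3$. For $c = 1$, $p(R) = |R| - |C_1|$ is even; since $C_1 \subsetneq R$ are distinct sets we have $p(R) \ge 1$, and parity then upgrades this to $p(R) \ge 2$, giving $c + p(R) \ge 3$. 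The only subtlety to handle carefully is justifying $p(R) > 0$ in the $c = 1$ case from the strict inclusion $C_1 \subsetneq R$, after which the parity of odd-minus-odd finishes the bound.
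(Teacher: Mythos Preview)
Your argument is correct. The forest representation, the reduction to a single rooted tree, the induction on height, and the case analysis on the number of children all go through; in particular the parity observations in the $c=1$ and $c=2$ cases are exactly what is needed, and the strict inclusion $C_1 \subsetneq R$ in the $c=1$ case does give $p(R)>0$ before parity lifts it to $p(R)\ge 2$. The only tacit assumption is $\mathcal{L}\neq\emptyset$ (so that $r\ge 1$), and the empty case is trivial.

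The paper takes a rather different route. It argues by induction on $|U|$: pass to a laminar family of \emph{maximum} cardinality (so that some minimal member has size exactly $3$), delete two of the three elements of that minimal set from $U$, observe that every other member of $\mathcal{L}$ remains odd and of size at least $3$, and apply the inductive hypothesis on the smaller universe. Your approach instead works directly with an arbitrary $\mathcal{L}$ via its containment forest and a local inequality $c+p(R)\ge 3$. The trade-offs: the paper's proof is shorter once the ``pass to a maximum family so a size-$3$ set exists'' step is accepted, but that step is somewhat implicit. Your proof avoids that auxiliary maximality device, handles any $\mathcal{L}$ directly, and makes transparent precisely where each hypothesis (oddness for $c\in\{1,2\}$, the bound $|L|\ge 3$ for $c=0$) is consumed.
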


\begin{proof}
We will prove the above statement by induction (on the size of the universe of elements $U$). For the base case we have that if $|U| = 3$ then $\mathcal{L}$ can have only one odd cardinality subset of the elements of $U$ with cardinality at least 3: namely, the whole set $U$ and so $|\mathcal{L}| = 1$ and the statement is trivially true.

Now suppose that $|U| > 3$. Let $\mathcal{L}$ be a laminar family of $U$ satisfying the conditions of the lemma with the maximum possible cardinality (among all other candidates laminar families of $U$). In this $\mathcal{L}$, let $S$ be a subset with the smallest cardinality i.e. $S = \arg \min_{L \in \mathcal{L}} |L|$. Trivially, $S$ has cardinality 3. Now, we remove from $U$ all elements $u \in S$ except one. Let $\bar{U}$ be the new universe with all but one of the elements of $S$ removed from $U$. Observe that all sets $L \in \mathcal{L}$, $L \neq S$ still fulfill the conditions of the lemma. Let $\bar{\mathcal{L}}$ be the new laminar family on $\bar{U}$. By the induction hypothesis  $|\bar{\mathcal{L}}| \leq \frac{|\bar{U}|-1}{2}$ and, by the previous observation,  $|\bar{\mathcal{L}}| = |\mathcal{L}| - 1$ ( $\bar{\mathcal{L}}$ is just $\mathcal{L}$ simply without $S$, and the rest of the sets are present). So, we have that

\begin{eqnarray*}
|\bar{\mathcal{L}}| \,\,\,\leq\,\,\, \frac{|\bar{U}| -1}{2} \,\,\,=\,\,\, \frac{|U| -2 -1}{2} \,\,\,=\,\,\, \frac{|U| - 1}{2} -1
\end{eqnarray*}

\noindent
and so we conclude that

\begin{eqnarray*}
 |\mathcal{L}| \,\,\,=\,\,\, |\bar{\mathcal{L}}| + 1  \,\,\,\leq\,\,\, \frac{|U| - 1}{2} -1 +1 \,\,\,=\,\,\, \frac{|U|-1}{2} \text{.}
\end{eqnarray*}
\end{proof}

We would like an upper bound on the total number of the tight linearly independent constraints that constitute $x^*$ in terms of $opt = \sum_e x^*_e$ i.e. the optimal (fractional) solution value of the relaxation of the natural LP for the Bounded Color Matching problem.

\begin{lemma}\label{corollary}
Take any basic feasible solution $x^* \in (0,1]^{|E|}$. Then, for the sets $\mathcal{F}, \mathcal{L}, \mathcal{Q}$ characterizing the solution $x^*$ as described in Lemma \ref{rank_lemma} we have that

\begin{equation}
|support(x^*)| =  |\mathcal{F}| + |\mathcal{L}| + |\mathcal{Q}| \leq 4opt
\end{equation}

\noindent
for general graphs and

\begin{equation}
|support(x^*)| = |\mathcal{F}| + |\mathcal{Q}|    < 3opt
\end{equation}

\noindent
for bipartite graphs.
\end{lemma}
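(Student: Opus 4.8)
The plan is to bound each of the three index sets $\mathcal{F}$, $\mathcal{L}$, $\mathcal{Q}$ separately in terms of $opt = \sum_{e} x^*_e$, starting from the identity $|support(x^*)| = |E| = |\mathcal{F}| + |\mathcal{L}| + |\mathcal{Q}|$ already guaranteed by item~5 of Lemma~\ref{rank_lemma} (the hypothesis $x^* \in (0,1]^{|E|}$ forces the support to be all of $E$). Concretely, I aim to prove $|\mathcal{F}| \leq 2\,opt$, $|\mathcal{Q}| \leq opt$ and $|\mathcal{L}| \leq opt$; summing these yields the bound $4\,opt$ for general graphs, and dropping the (absent) blossom term yields $3\,opt$ for bipartite graphs, with a short extra argument to make the latter strict.

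The two easy bounds come from double counting against the tight constraints. For $\mathcal{F}$ I would add up the tight vertex equalities, $\sum_{v \in \mathcal{F}} \sum_{e \in \delta(v)} x^*_e = |\mathcal{F}|$, and observe that each edge lies in $\delta(v)$ for at most its two endpoints, so the left-hand side is at most $2\sum_e x^*_e = 2\,opt$. For $\mathcal{Q}$ I would add up the tight color equalities, $\sum_{j \in \mathcal{Q}} \sum_{e \in E_j} x^*_e = \sum_{j \in \mathcal{Q}} w_j$; since the color classes partition $E$ each edge is counted exactly once, so the left side is at most $opt$, while $w_j \in \mathbb{Z}^+$ gives $w_j \geq 1$ and hence $|\mathcal{Q}| \leq \sum_{j \in \mathcal{Q}} w_j \leq opt$.

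The step I expect to be the real obstacle is bounding $|\mathcal{L}|$, because the blossom constraints genuinely overlap: $\mathcal{L}$ is laminar (its sets are nested), so I cannot simply sum the equalities $\sum_{e \in E(S)} x^*_e = \tfrac{|S|-1}{2}$ without multiply counting edges. My plan is to pass to the maximal sets of $\mathcal{L}$, i.e. the roots $R_1,\dots,R_p$ of the laminar forest. These are pairwise disjoint (two intersecting maximal sets would be nested by laminarity), hence their induced edge sets $E(R_1),\dots,E(R_p)$ are disjoint, and therefore $\sum_{t} \tfrac{|R_t|-1}{2} = \sum_t \sum_{e \in E(R_t)} x^*_e \leq \sum_e x^*_e = opt$. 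Independently, for each $t$ the subfamily $\{S \in \mathcal{L} : S \subseteq R_t\}$ is a laminar family of odd sets of size $\geq 3$ over the universe $R_t$, so Lemma~\ref{laminar} gives that it has at most $\lfloor (|R_t|-1)/2 \rfloor = \tfrac{|R_t|-1}{2}$ members (the floor is exact since $|R_t|$ is odd). Since every set of $\mathcal{L}$ lies under exactly one root, summing over $t$ yields $|\mathcal{L}| \leq \sum_t \tfrac{|R_t|-1}{2} \leq opt$, the key inequality. Combining the three bounds gives $|\mathcal{F}| + |\mathcal{L}| + |\mathcal{Q}| \leq 4\,opt$.

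For bipartite graphs the blossom inequalities are absent, so $\mathcal{L} = \emptyset$ and the same two bounds give $|\mathcal{F}| + |\mathcal{Q}| \leq 3\,opt$; it remains to upgrade this to a strict inequality. Here I would invoke linear independence (item~4 of Lemma~\ref{rank_lemma}): in a bipartite component with sides $A$ and $B$, each edge has exactly one endpoint on each side, so $\sum_{v \in A}\chi_{\delta(v)} = \sum_{v \in B}\chi_{\delta(v)}$ is a nontrivial linear dependence. Hence $\mathcal{F}$ cannot contain all the vertices of any component, so at least one non-isolated vertex $v_0$ (which exists because the support is nonempty) is excluded from $\mathcal{F}$; an edge incident to $v_0$ is then counted at most once, rather than twice, in $\sum_{v \in \mathcal{F}} \sum_{e \in \delta(v)} x^*_e$, giving $|\mathcal{F}| < 2\,opt$ and therefore $|\mathcal{F}| + |\mathcal{Q}| < 3\,opt$. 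The only points needing genuine care are the root-disjointness bookkeeping for $\mathcal{L}$ and this final strictness argument; the remaining inequalities are routine double counting.
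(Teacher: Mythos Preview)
Your proposal is correct and follows essentially the same approach as the paper: bound $|\mathcal{F}|$, $|\mathcal{Q}|$, and $|\mathcal{L}|$ separately by $2\,opt$, $opt$, and $opt$ via tightness and double counting, then invoke Lemma~\ref{laminar} on the laminar family. In fact your treatment is more careful than the paper's in two places: you explicitly handle the possibility of several maximal sets $R_1,\dots,R_p$ in $\mathcal{L}$ (the paper bounds only the size of a single maximal set and then appeals to Lemma~\ref{laminar} without spelling out the summation over roots), and you supply a genuine argument for the \emph{strict} inequality in the bipartite case via the linear dependence $\sum_{v\in A}\chi_{\delta(v)}=\sum_{v\in B}\chi_{\delta(v)}$, which the paper's proof does not actually justify.
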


\begin{proof}
As usual, let  $opt$ be the  optimal solution value of the relaxation, i.e., $opt = \sum_{e} x^*_e$. Given $opt$ we would like to enumerate how many constraints we can have from each family of tight, linearly independent constraints $\mathcal{F}$, $\mathcal{Q}$ and (in the case of general graphs) $\mathcal{L}$ that characterize $x^*$.

First of all, it should be clear  that $|\mathcal{Q}| \leq opt$, i.e. the number of tight \textit{color} constraints in $\mathcal{Q}$ is at most $opt$ (since $w_j \in \mathbb{Z}^+$). In general, if we denote by $\xi = \min_{j \in [k]} w_j$, then $|\mathcal{Q}| \leq \frac{\mathrm{opt}}{\xi}$.

Second, consider the maximum (cardinality) matching on a graph on $|V|$ vertices. If $|V|$ is even we can have at most $\frac{|V|}{2}$ edges (actually this is the case of the perfect matching), otherwise we can have at most $\lfloor \frac{|V|}{2} \rfloor$ edges. We will show that $|\mathcal{F}| \leq 2opt$. In the case the graph $G$ is bipartite, this is easy: suppose that the graph is bipartite and  that $|\mathcal{F}| > 2opt$. Then at least one side of the bipartition must have strictly more than $opt$ tight vertices and if we sum the value of the edges incident to these vertices we would get value greater than $opt$, a contradiction (remember that a vertex $v$ is tight if $\sum_{e \in \delta(v)} x^*_e = 1$).  Now assume that the graph is not bipartite. Again, it is not hard to show that $|\mathcal{F}| \leq 2opt$. Indeed,

\begin{displaymath}
|\mathcal{F}| \,\,\,\le\,\,\, |V| \,\,\,\le\,\,\, \sum_{v \in V} \sum_{e \in  \delta(v)} x_e^* \,\,\,=\,\,\, 2 opt.
\end{displaymath}

Finally, from Lemma \ref{laminar} we have that $\mathcal{L} \leq \lfloor \frac{|V| -1}{2} \rfloor$. This means that we can have at most $\lfloor \frac{|V|- 1}{2} \rfloor$ tight inequalities from the corresponding set of constraints defined by $\mathcal{L}$. Observe that the largest family $L \in \mathcal{L}$ can have cardinality at most $2opt+1$: assume that this is false and there is $S \in \mathcal{L}:$ $|S| > 2opt +1$. Then we have that

\begin{eqnarray*}
\sum_{e \in E(S)} x^*_e = \frac{|S|-1}{2} \geq \frac{2opt +2 -1 }{2} > opt
\end{eqnarray*}

\noindent
a contradiction. So $\max_{S \in \mathcal{L}} |S| \leq 2opt+1$. By the laminarity of $\mathcal{L}$, and since we consider only odd subsets, an immediate application of Lemma \ref{laminar} on the maximal subsets $S \in \mathcal{L}$ gives us the desired bound that $|\mathcal{L}| \leq opt$ and this completes the proof of the lemma.
\end{proof}

\subsection{An Application of Lemma \ref{corollary}}

As a first step, we consider the special case where all $w_j$'s are equal to 1, i.e., we want to find a maximum cardinality matching that has at most one edge from each color. We will prove that if we allow a moderate additive violation of the color budgets, we can approximate the optimal objective function value within any desired accuracy.

We consider the following algorithm (see Algorithm \ref{assymptotic}). We require that the parameter $\alpha$ is greater or equal than 3 for bipartite instances or greater or equal than 4 for general graph instances.

\begin{algorithm}
\caption{Algorithm for 1-Bounded Color Matching (with parameter $\alpha$)}
\label{assymptotic}

\noindent
\textit{\textbf{Input:}} An un-weighted graph $G=(V,E)$, a color function $c: E \rightarrow [k]$,

parameter $\alpha \in \mathbb{Z}^+$ such that $\alpha \geq 3$ if $G$ is bipartite, else $\alpha \geq 4$.

\noindent
\textit{\textbf{Output:}}A matching $M$ such that $|M \cap E_j| \leq \alpha$, $\forall$ color classes $j \in [k]$.

\begin{enumerate}
\item \textbf{initialize} $M := \emptyset$.
\item Solve the Linear Programming relaxation of the (current) problem to obtain an optimal basic solution $x$:
    \begin{itemize}
    \item[-] \textbf{if} $G$ is bipartite solve (\ref{bcm_polyhedron}) with (\ref{bipartite_polyhedron}) as $\mathcal{M}$,
    \item[-] \textbf{else} use (\ref{bcm_polyhedron}) with (\ref{general_polyhedron}) as $\mathcal{M}$.
    \end{itemize}
     Define $E(G) = \{x_i: i \in support(x)\}$.
\item \textbf{for every} edge $e \in E$ such that $x_e = 1$ \textbf{do}
	\begin{enumerate}
	\item[$\circ$] Add $e$ to $M$.
	\item[$\circ$] Delete $e$ and the endpoints of $e$ from $G$,

    remove the constraint for color class $C_j$ such that $e \in E_j$,

    remove the constraints of the vertices $\{u,v\} = e$,

    continue (\textbf{goto} step 2).
	\item[$\circ$] \textbf{if} $|support(x)| = 0$ \textbf{then} \textbf{return} $M$.
	\end{enumerate}
\item \textbf{Relaxation:} \textbf{if} there is a tight color class $C_j$ such that $|E_j| \leq \alpha$

		\textbf{then} relax the constraint for this color class.
\item \textbf{Rounding:} \textbf{else} there is an edge $e$ belonging to a tight color class $C_j$ such that $x_e < 1/ \alpha$
	\begin{itemize}
	\item[$\circ$] Round $x_e$ to zero and continue (i.e., remove $e$ from the current solution) (\textbf{goto} step 2).
	\end{itemize}
\end{enumerate}
\end{algorithm}

Observe that in case step 4 is not performed then this means that all tight color classes have support greater than $\alpha$ and the corresponding variables sum up to one (since $w_j =1, \forall j \in [k]$). So we have strictly more than $\alpha$ variables summing up to one and so we conclude that at least one should be less than $\frac{1}{\alpha}$ . Let the linear program in the $\psi$-th iteration be $\text{LP}_{\psi}$. Define the distance between the value of the LP between two iterations $\psi$ and $\psi+1$ of the Algorithm 1 to be $\Delta_{\psi} = value(\text{LP}_{\psi}) - value(\text{LP}_{\psi+1})$ where $value(\text{LP})$ is the (optimal) solution value of the linear program LP. It should be clear from step 5 of the algorithm that $0 \leq \Delta_{\psi} \leq 1/\alpha$ for every iteration.

\begin{lemma}
Algorithm \ref{assymptotic} is a $((1-\frac{3}{\alpha} + \epsilon,0),(1, \alpha-1))$ bi-criteria approximation algorithm for the uniform weight 1-Bounded Color Bipartite Matching problem.
\end{lemma}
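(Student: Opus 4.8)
The plan is to control the two criteria separately, using a direct case analysis for the color violation and a potential argument for the objective. For the violation bound (the $(1,\alpha-1)$ component, i.e.\ $|M\cap E_j|\le 1\cdot w_j+(\alpha-1)=\alpha$ since $w_j=1$), I would fix a color $C_j$ and distinguish whether its constraint is ever relaxed in step 4. If some integral edge of color $j$ is added in step 3 before any relaxation, then because $w_j=1$ the constraint $\sum_{e\in E_j}x_e\le 1$ forces every other color-$j$ edge to value $0$, so none survives in $support(x)$ and $|M\cap E_j|=1$. If instead the constraint of $C_j$ is relaxed, this happens only when at most $\alpha$ edges of color $j$ remain in the current support; since the algorithm never re-introduces edges, at most $\alpha$ edges of color $j$ can subsequently enter $M$, giving $|M\cap E_j|\le\alpha$. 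These two cases are exclusive, which settles the violation side.

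For the objective I would introduce the potential $\Phi_\psi = value(\mathrm{LP}_\psi)+|M_\psi|$, where $M_\psi$ is the matching accumulated up to iteration $\psi$, so that $\Phi_0=opt$ and $\Phi_{\mathrm{final}}=|M|$. The heart of the argument is to check how $\Phi$ moves under each of the three operations. A rounding step (step 5) fixes an edge $e$ with $x_e<1/\alpha$ to zero; since this only shrinks the feasible region, $0\le\Delta_\psi\le x_e<1/\alpha$ and $\Phi$ drops by less than $1/\alpha$. A relaxation step (step 4) only enlarges the feasible region, so $\Delta_\psi\le0$ and $\Phi$ does not decrease. The delicate case is an integral addition in step 3: here $w_j=1$ is essential, because once an edge $e$ with $x_e=1$ is chosen all other edges of color $c(e)$ already have value $0$ and have been dropped from $support(x)$, so deleting the color constraint of $c(e)$ is vacuous and the LP value falls by exactly $1$ while $|M|$ grows by $1$, leaving $\Phi$ unchanged. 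Summing these contributions over the run and letting $c$ denote the number of rounding steps yields $|M|=\Phi_{\mathrm{final}}\ge\Phi_0-c/\alpha=opt-c/\alpha$.

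It then remains to bound $c$, and this is where Lemma~\ref{corollary} enters. Since step 2 resets $E(G)$ to $support(x)$ and every subsequent operation only deletes edges, the edge set is monotonically non-increasing, so every edge that is ever rounded down or ever matched belongs to the support $S_0$ of the very first basic solution; moreover matched edges and rounded edges are disjoint. Hence $|M|+c\le|S_0|$, and by the bipartite bound of Lemma~\ref{corollary} applied to the first solve (which has all-positive components on $E(G)$, so Lemma~\ref{rank_lemma} and Lemma~\ref{corollary} apply with $\mathcal{L}=\emptyset$) we get $|S_0|=|\mathcal{F}|+|\mathcal{Q}|<3\,opt$. Combining $|M|\ge opt-c/\alpha$ with $c\le |S_0|-|M|<3\,opt-|M|$ gives $|M|(1-1/\alpha)>opt(1-3/\alpha)$, i.e.\ $|M|>\frac{\alpha-3}{\alpha-1}\,opt=\bigl(1-\tfrac3\alpha+\epsilon\bigr)opt$ with $\epsilon=\frac{\alpha-3}{\alpha(\alpha-1)}>0$ for $\alpha>3$, which is exactly the claimed $(1-\tfrac3\alpha+\epsilon,0)$ guarantee. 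Polynomial running time is immediate: each iteration removes at least one edge or one color constraint, so there are $O(m)$ iterations, each solving a polynomial-size (blossom-free) bipartite LP.

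The step I expect to be the main obstacle is the potential bookkeeping for the integral-addition case, since it is the only point where $w_j=1$ is genuinely used: I must argue carefully that resetting $E(G)=support(x)$ has already discarded the sibling color-$j$ edges, so that dropping the color constraint cannot let the LP value recover and the addition is exactly ``free.'' The monotonicity claim $|M|+c\le|S_0|$ is the other point needing care, because re-solving the LP after a relaxation could in principle produce a basic solution on a different vertex of the polytope; the observation that saves it is that the support of any re-solve is contained in the current (only-shrinking) edge set, so no rounded edge can escape $S_0$.
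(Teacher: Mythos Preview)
Your argument is correct and follows the same skeleton as the paper's proof: bound the loss per rounding step by $1/\alpha$, and invoke Lemma~\ref{corollary} to cap the number of such steps via $|S_0|<3\,opt$. The violation analysis and the integral-addition bookkeeping (where $w_j=1$ forces the sibling color-$j$ edges to value~$0$) match the paper's reasoning.

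The one genuine difference is in the final arithmetic. The paper argues directly that at most $3\,opt-1-\alpha$ rounding iterations can occur (since once fewer than $\alpha$ edges survive, a relaxation is always available), yielding $sol\ge opt\,(1-\tfrac{3}{\alpha})+\tfrac{1}{\alpha}+1$ and hence an \emph{instance-dependent} $\epsilon=\tfrac{1}{\alpha\,opt}+\tfrac{1}{opt}$. You instead couple the two inequalities $|M|>opt-c/\alpha$ and $c<3\,opt-|M|$ to get $|M|>\tfrac{\alpha-3}{\alpha-1}\,opt$, which gives an \emph{instance-independent} $\epsilon=\tfrac{\alpha-3}{\alpha(\alpha-1)}$. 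Your multiplicative constant $\tfrac{\alpha-3}{\alpha-1}$ is in fact strictly larger than $1-\tfrac{3}{\alpha}$ for every $\alpha>3$, so your version is a slight strengthening of the stated ratio (at the cost of losing the additive $+\tfrac{1}{\alpha}+1$ term, which only matters for small $opt$). Both routes are valid; yours is arguably cleaner since the $\epsilon$ does not depend on the instance.
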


\begin{proof}
First of all, it is easy to see that the algorithm terminates in polynomial time and the solution returned is indeed a matching. In fact, we can have at most $|E|$ rounding steps (where $|E| = support(x)$ is the number of edges of the \textit{initial} graph) and at most $|E|$ relaxation steps (one for each color).

The fact that the algorithm returns a solution that violates every color constraint by at most an additive $\alpha-1$ comes from the relaxation step: we relax the constraint of a color class only when $|E_j \cap support(x^*)| \leq \alpha$, so, in the worst case, we will include all these edges in our final solution resulting in a surplus of at most $\alpha-1$ edges.

To see that the algorithm returns a $(1-\frac{3}{\alpha})$ approximate solution on the objective function, we notice that in each step the value of the LP solution decreases by at most $1/\alpha$ i.e. between two consecutive iterations $i$ and $i+1$ in which we perform a rounding step in the $i$-th we have that $\Delta_{i} = value(\text{LP}_{i}) - value(\text{LP}_{i+1}) \leq \frac{1}{\alpha} $ and, moreover, one edge is deleted from the graph. By Lemma \ref{corollary}, the number of edges is at most $ 3 opt - 1$. So, we can have at most $3opt -1 -\alpha$ iterations (because when we have fewer than $\alpha$ edges, clearly we can perform a relaxation step) and in each iteration we lose at most $1/\alpha$ for a total loss of

\begin{displaymath}
\frac{1}{\alpha} \cdot (3opt -1 -\alpha) = opt  \frac{3}{\alpha} - \frac{1}{\alpha} -1
\end{displaymath}

\noindent
and so

\begin{eqnarray*}
sol \geq opt - \frac{3opt}{\alpha} + \frac{1}{\alpha} +1 = opt \Big(1 - \frac{3}{\alpha}\Big) + \frac{1}{\alpha} +1
\end{eqnarray*}

\noindent
 from which we conclude that

\begin{eqnarray}
\frac{sol}{opt} \geq \frac{opt(1 - \frac{3}{\alpha}) + \frac{1}{\alpha} +1}{opt}   = \Big( 1 - \frac{3}{\alpha} \Big) + \epsilon
\end{eqnarray}

\noindent
for $\epsilon =  \frac{1}{\alpha \cdot opt} + \frac{1}{opt}$.  The proof for general graphs is identical, by just replacing the $3opt - 1$ number of iterations with the upper bound of edges for general graphs (due to Lemma \ref{corollary}).
\end{proof}

\begin{corollary}
There exist  polynomial time $((1-\frac{3}{\alpha} + \epsilon,0),(1, \alpha-1))$ and $((1-\frac{4}{\alpha} + \epsilon,0),(1, \alpha-1))$ bi-criteria approximation algorithms for the Bipartite and General Graph 1-Bounded Color Matching problem respectively.
\end{corollary}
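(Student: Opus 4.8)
The plan is to notice that the bipartite half of the corollary is exactly the content of the preceding lemma, so the only genuinely new claim is the general-graph guarantee $((1-\tfrac{4}{\alpha}+\epsilon,0),(1,\alpha-1))$. Everything then reduces to re-running Algorithm \ref{assymptotic} on the blossom polyhedron and substituting the general-graph support bound of Lemma \ref{corollary}. First I would run Algorithm \ref{assymptotic} with $\mathcal{M}$ instantiated by the general polyhedron (\ref{general_polyhedron}) and with $\alpha \geq 4$, and verify that the inner loop is well defined at every iteration. The point that needs care is that steps 4--5 always apply as long as a fractional edge survives. If the current optimal basic solution $x^*$ (restricted to its positive support) had no integral coordinate and no tight color class, then by Lemma \ref{rank_lemma} all of its tight, linearly independent constraints would come from $\mathcal{F}\cup\mathcal{L}$, so $x^*$ would be a vertex of the pure matching polytope $\mathcal{M}$, which is integral; this contradicts the assumed fractionality. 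Hence $\mathcal{Q}\neq\emptyset$, there is a tight color class $C_j$ with $\sum_{e\in E_j}x^*_e = w_j = 1$, and either $|E_j\cap support(x^*)|\leq\alpha$ (relax, step 4) or more than $\alpha$ positive values sum to $1$ and some edge has $x_e<1/\alpha$ (round down, step 5). This is the same dichotomy used in the bipartite argument, now justified through the integrality of (\ref{general_polyhedron}).

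Next I would transfer the two performance bounds essentially verbatim. The violation guarantee $(1,\alpha-1)$ is unchanged: a color constraint is dropped only once its surviving support is at most $\alpha$, so in the worst case all such edges enter $M$, an additive surplus of at most $\alpha-1$ over $w_j=1$; re-solving after each deletion or fixing keeps the iterate feasible for $\mathcal{M}$, so the output is a matching, and the number of rounding and relaxation steps is polynomial exactly as before. The per-iteration LP drop is still $0\leq\Delta_\psi\leq 1/\alpha$, since that bound comes only from rounding a single variable below $1/\alpha$ and is insensitive to the presence of blossom constraints, while relaxing a color constraint or fixing an integral edge to $1$ costs nothing in objective value.

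The only quantitative change is the count of value-losing iterations. Each such iteration deletes one edge, value-losing iterations cease once every tight color class has at most $\alpha$ surviving edges, and the initial support now obeys $|support(x^*)|\leq 4\,opt$ by the general-graph bound of Lemma \ref{corollary} (rather than the bipartite bound $<3\,opt$). I would therefore bound the number of value-losing iterations by $4\,opt-1-\alpha$ and the total loss by $\frac{1}{\alpha}(4\,opt-1-\alpha)=opt\,\frac{4}{\alpha}-\frac{1}{\alpha}-1$, giving
\begin{equation*}
sol \;\geq\; opt\Big(1-\frac{4}{\alpha}\Big)+\frac{1}{\alpha}+1,\qquad \frac{sol}{opt}\;\geq\;\Big(1-\frac{4}{\alpha}\Big)+\epsilon
\end{equation*}
with $\epsilon=\frac{1}{\alpha\cdot opt}+\frac{1}{opt}$, exactly mirroring the bipartite computation. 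Together with the preceding lemma for the bipartite case, this yields both guarantees.

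I expect the one non-mechanical step to be the well-definedness check of the inner loop in the general setting: namely, showing that a surviving fractional optimum forces a tight color class (via Lemma \ref{rank_lemma} together with the integrality of the blossom polyhedron), so that step 5 can always locate an edge below $1/\alpha$. Once that invariant is in place, the rest is the routine replacement of $3\,opt-1$ by $4\,opt$ in the iteration count.
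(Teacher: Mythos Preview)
Your proposal is correct and follows exactly the route the paper intends: the bipartite case is the preceding lemma verbatim, and the general case is obtained by running Algorithm~\ref{assymptotic} over the blossom polyhedron and replacing the bipartite support bound $|support(x^*)|<3\,opt$ from Lemma~\ref{corollary} by the general bound $|support(x^*)|\le 4\,opt$ in the iteration count. The paper's own justification is the single sentence ``The proof for general graphs is identical, by just replacing the $3opt-1$ number of iterations with the upper bound of edges for general graphs (due to Lemma~\ref{corollary}),'' so you have in fact supplied more detail than the paper does---in particular, your argument that a strictly fractional extreme point must have $\mathcal{Q}\neq\emptyset$ (otherwise it would be a vertex of the integral matching polytope) plugs a small gap the paper leaves implicit when asserting that step~5 always finds an edge below $1/\alpha$.
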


\section{A Characterization of  Basic Feasible Solutions of the $\mathcal{M}_c$ Polyhedron}

In this section we  will prove that basic feasible solutions of the LP relaxation of our problem ($\mathcal{M}_c$) have certain properties that will allow us to design better and more general approximation algorithms. In some sense, we prove that every extreme point solution $x$ must be ``sparse" (i.e. its support size is relatively small). By taking advantage of the sparsity of such solutions, we will  design approximation algorithms that ``beat" the $\frac{1}{2}$ integrality gap (modulo a slight violation of the budget constraints). Our algorithms  are based on the iterative rounding approach \cite{mohit_thesis} (see also \cite{iterative_book} for a comprehensive account of applications of iterative methods in combinatorial optimization). We employ a fractional charging technique (which was first introduced in \cite{DBLP:journals/siamcomp/BansalKN09}) to characterize the structure of extreme point solutions of the LP relaxation of our problem.

Recall Lemma \ref{rank_lemma}. This lemma characterizes all basic feasible solutions $x \in (0,1]^{|E|}$: every basic feasible solution must respect  Lemma \ref{rank_lemma}. But we can make some additional observations regarding the structure of any basic feasible solution $x$ (recall that the residual graph is the graph $G = (V, E)$ where $E = \{ e \in E: e \in support(x) \}$):

\begin{lemma}\label{lemma}
Let $x$ be any basic feasible solution $x$ such that $x_e > 0 ~\forall e$ (i.e. there is no edge with $x_e = 0$) in our LP relaxation $\mathcal{M}_c$ (without the blossom inequalities). Then one of the following must be true:
\begin{enumerate}
\item either there is an edge $e$ such that $x_e = 1$,
\item or there is a tight color class $j \in \mathcal{Q}$ such that $|E_j| \leq w_j+1$ in the residual graph,
\item or there is a tight vertex $v \in \mathcal{F}$ such that the degree of $v$ in the residual graph is 2.
\end{enumerate}
\end{lemma}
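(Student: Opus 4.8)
The plan is to argue by contradiction: assume that none of the three alternatives holds, and derive a violation of the linear independence guaranteed by Lemma \ref{rank_lemma}. Since we work without the blossom inequalities, the relevant analogue of that lemma has $\mathcal{L} = \emptyset$, so the governing accounting identity is the rank equality $|E| = |\mathcal{F}| + |\mathcal{Q}|$. First I would translate the failure of each alternative into a lower bound on a combinatorial quantity. If alternative (1) fails then $x_e < 1$ for every residual edge, so every tight vertex $v \in \mathcal{F}$, whose incident values sum to exactly $1$, must have residual degree at least $2$; if in addition alternative (3) fails there is no tight vertex of residual degree exactly $2$, so every $v \in \mathcal{F}$ has $\deg_R(v) \geq 3$. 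Likewise, the failure of alternative (2) says that every tight color $j \in \mathcal{Q}$ has $|E_j| \geq w_j + 2 \geq 3$ in the residual graph, since $w_j \geq 1$.

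Next I would run a double-counting argument, once on edge endpoints and once on color memberships. Summing residual degrees over $\mathcal{F}$ gives $3|\mathcal{F}| \leq \sum_{v \in \mathcal{F}} \deg_R(v) \leq \sum_{v \in V} \deg_R(v) = 2|E|$, hence $|\mathcal{F}| \leq \tfrac{2}{3}|E|$; summing color sizes over $\mathcal{Q}$ gives $3|\mathcal{Q}| \leq \sum_{j \in \mathcal{Q}} |E_j| \leq |E|$, because each edge lies in exactly one color class, hence $|\mathcal{Q}| \leq \tfrac{1}{3}|E|$. Adding the two bounds yields $|\mathcal{F}| + |\mathcal{Q}| \leq |E|$, which, confronted with the rank equality $|E| = |\mathcal{F}| + |\mathcal{Q}|$, forces both inequalities to hold with equality.

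The main obstacle is precisely that this counting is only tight, not strictly violated, so it does not by itself produce a contradiction; the real work is to exploit the rigidity of the equality case. Equality forces every residual edge to have both endpoints in $\mathcal{F}$ (no endpoint charge is wasted on vertices outside $\mathcal{F}$), every residual edge to have its color in $\mathcal{Q}$, every $v \in \mathcal{F}$ to have degree exactly $3$, and every $j \in \mathcal{Q}$ to satisfy $|E_j| = 3$ with $w_j = 1$. I would then read off a linear dependence among the tight constraint vectors: since each edge has exactly two endpoints, both in $\mathcal{F}$, we get $\sum_{v \in \mathcal{F}} \chi_{\delta(v)} = 2\,\chi_E$, whereas, since each edge belongs to exactly one color and that color lies in $\mathcal{Q}$, we get $\sum_{j \in \mathcal{Q}} \chi_{E_j} = \chi_E$. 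Hence $\sum_{v \in \mathcal{F}} \chi_{\delta(v)} = 2 \sum_{j \in \mathcal{Q}} \chi_{E_j}$, a nontrivial linear relation among $\{\chi_{\delta(v)}\}_{v \in \mathcal{F}} \cup \{\chi_{E_j}\}_{j \in \mathcal{Q}}$ (both families are nonempty whenever $E \neq \emptyset$, as every edge then has an endpoint in $\mathcal{F}$ and a color in $\mathcal{Q}$). This contradicts the linear independence asserted in item $4$ of Lemma \ref{rank_lemma}, completing the argument.
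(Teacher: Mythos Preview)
Your proof is correct and follows the same overall strategy as the paper: assume all three alternatives fail, show the resulting inequalities must collapse to equalities because of the rank identity $|E|=|\mathcal{F}|+|\mathcal{Q}|$, and then read off the linear dependence $\sum_{v\in\mathcal{F}}\chi_{\delta(v)}=2\sum_{j\in\mathcal{Q}}\chi_{E_j}$ to contradict Lemma~\ref{rank_lemma}.

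The difference is purely in how the counting is carried out. The paper uses a \emph{fractional charging} scheme: each edge $e$ gives $\tfrac{1}{2}(1-x_e)$ to its color (if in $\mathcal{Q}$) and $\tfrac{1}{4}(1+x_e)$ to each endpoint (if in $\mathcal{F}$), for a total of at most~$1$; each tight object is then shown to receive at least~$1$. You instead use a direct integer double count: $3|\mathcal{F}|\le\sum_{v\in\mathcal{F}}\deg_R(v)\le 2|E|$ and $3|\mathcal{Q}|\le\sum_{j\in\mathcal{Q}}|E_j|\le |E|$. Your version is more elementary and never touches the actual values $x_e$ beyond the observation $x_e<1\Rightarrow\deg_R(v)\ge 2$; the paper's fractional scheme is the standard Bansal--Khandekar--Nagarajan template, which is heavier machinery than this particular lemma requires but generalizes more readily to settings with blossom constraints or non-uniform budgets. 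Both routes land on the same equality case (every endpoint in $\mathcal{F}$, every color in $\mathcal{Q}$) and the same contradicting dependence.
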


\begin{remark}
The above characterization is for the polyhedron defined by (\ref{bipartite_polyhedron}) plus the budget constraint inequalities for bipartite graphs. But, we may use these set of linear inequalities described by $\mathcal{M}_c$ (vertex degree constraints plus the color budget constraints) even for general graphs, without any loss of generallity, in contrast with the previous section where we used two different polyhedra for the bipartite and the general graph case. The reason is the following: in the algorithm of the previous section, we require that the final solution is integral after we drop the color budget constraints, so, for the general graph case, we need the characterization of (\ref{general_polyhedron}), as otherwise the final solution would not be integral and we would need an extra step to retrieve an integral solution. The problem lies on the integrality gap of the polyhedron (\ref{bipartite_polyhedron}) when the underlying graph is not bipartite, which is (essentially) $\frac{2}{3}$, and thus it would impossible to get arbitrary close to the optimal objective function value.

On the other hand,  the output of the algorithm of the next section is guaranteed to be integral and since the behavior of the two different formulations is essentially the same (they are both fractional polyhedra and they both have the same integrality gap), it is unnecessary to use the formulation defined by (\ref{general_polyhedron}). In other words, without any loss, we can use the simpler LP formulation described by (\ref{bipartite_polyhedron}) plus the extra linear color budget constraints even for the case that the graph is arbitrary.
\end{remark}

\begin{proof}
We will prove the claim of the lemma by deriving a contradiction. Assume that for all edges $e$ in the residual graph we have that $0 < x_e <1$. We will employ a fractional charging argument in which every edge $e$ with $x_e >0$ will distribute fractional charge to every tight object that is part of (which might be vertex or color class). We will employ the scheme in such a way that every edge gives a charge of at most 1, for a total charge of \textit{at most} $|E|$ (the number of edges in our residual graph). Then, we will show that every tight object will receive charge of \textit{at least} one, for a total collected charge of \textit{at least} $|E|$. In fact, we will show that the total charge distributed is \textit{strictly less} than $|E|$, deriving the desired contradiction. Our charging scheme will work based on the hypothesis of the lemma.

In fact, for the sake of contradiction, let us assume that in any basic feasible solution $x$ (such that $x_e \in (0,1) ~\forall e$) we have

\begin{enumerate}
\item for every tight color class $j \in \mathcal{Q}$, $|E_j| > w_j +1$ and
\item for every tight vertex $v \in \mathcal{F}:~\deg(v) \geq 3$.
\end{enumerate}

Now, consider the following charging scheme in which every (fractional) edge $e = (u,v)$, such that $e \in E_j$, distributes fractional charge as follows:

\begin{enumerate}
\item if $j \in \mathcal{Q}$, i.e. if the color of edge $e$ is tight, then $e$ distributes charge of $\frac{1}{2}(1-x_e) >0$ to the color class $C_j$.
\item every tight vertex $\{ u,v \} \in e$ that belongs to $\mathcal{F}$ receives from $e$ a charge of $\frac{1}{4}(1+x_e) <1$.
\end{enumerate}

\noindent
Observe that the total charge distributed by any edge is at most

\begin{eqnarray*}
\frac{1}{2}(1-x_e) + 2 \Big( \frac{1}{4}(1+x_e) \Big) = \frac{1 - x_e + 1 + x_e}{2} = 1
\end{eqnarray*}

\noindent
So, the total charge distributed by all (fractional) edges of the residual graph is \textit{at most} $|E|$.

Now, let us calculate the total charge received by every tight vertex $v \in F$ and every tight color class $C_j \in \mathcal{Q}$.  We first begin by the vertices $v \in \mathcal{F}$. Consider such a vertex. The total charge received by $v$ is the sum of the charges given to it by all edges incident to $v$:

\begin{eqnarray*}
\textrm{charge}(v)   =  \sum_{e \in \delta(v)} \frac{1}{4}(1+x_e) & = & \frac{1}{4} \sum_{e \in \delta(v)} (1+x_e) \\
                                                                  & = & \frac{1}{4} (|\delta(v)| + 1) \geq 1
\end{eqnarray*}

\noindent
the last inequality following by the hypothesis that all tight vertices $\in \mathcal{F}$ have degree at least 3. So, every tight vertex $v \in \mathcal{F}$ receives total charge of at least 1.

Now we calculate the total charge received by any tight color class $C_j \in \mathcal{Q}$. As before, the total charge received by any such color class is the sum of the charges given to $C_j$ by all fractional edges of color $j$:

\begin{eqnarray*}
\textrm{charge} (C_j)   =  \sum_{e \in E_j} \frac{1}{2}(1-x_e) & = & \frac{1}{2} \sum_{e \in E_j} (1-x_e) \\
                                                               & = & \frac{1}{2} (|E_j| - w_j) \geq 1
\end{eqnarray*}

\noindent
where in the last inequality we used the fact that $C_j \in \mathcal{Q} \Rightarrow |E_j| \geq w_j+2$ (by hypothesis). So, again we see that every tight color class $\in \mathcal{Q}$ receives charge of at least 1. We conclude that the total charge that has been distributed is \textit{at least} $|\mathcal{F}|+|\mathcal{Q}| = |E|$.

We need to calculate the total charge given by all (fractional) edges of the graph. We argued that the total charge given is \textit{at most} $|E| = |\mathcal{F}|+|\mathcal{Q}|$ since every edge distributes a charge of at most 1. But, we will show that the total charge given is \textit{strictly less} than $|E|$, giving us the desired contradiction.

Indeed, if for some edge $e = (u,v)$ belonging to color class $C_j$ we have that one of its endpoints $u$ or $v$ does not belong to $\mathcal{F}$, i.e. if $\{u,v \} \nsubseteq \mathcal{F}$, then a charge of $\frac{1}{4}(1+x_e) >0$ is wasted, so the total charge is strictly less than 1, which results to a total charge strictly less than $|E|$. Similarly, if $C_j \notin \mathcal{Q}$ then a charge of $\frac{1}{2}(1-x_e) >0$ is wasted, and again we have total charge less than $|E|$. So, we may assume that all vertices belong to $\mathcal{F}$ and all color classes belong to $\mathcal{Q}$. But then observe that

\begin{eqnarray*}
\frac{1}{2} \sum_{v \in V} \chi_{\delta(v)} \,\,\,=\,\,\, \sum_{C_j \in C} \chi_{E_j}
\end{eqnarray*}

\noindent
where $\chi_{\delta(v)} \in \{0,1\}^{|E|}$ is the characteristic vector of the edges whose one endpoint is $v$ (analogously for $\chi_{E_j}$). So, the characteristic vectors corresponding to the vertices are \textit{not} linearly independent, a contradiction. We conclude that in the absence of an edge with unit value, either there is a color class $C_j \in \mathcal{Q}:~|E_j| \leq w_j+1$ or a tight vertex $v \in \mathcal{F}:~\deg(v) = 2$.
\end{proof}


\begin{remark}
The statement of the lemma holds even when the $w_j$'s are fractional. In such a case we just replace the $w_j +1$ term on the claim of the lemma with $\lceil w_j \rceil +1$ and the lemma is still true.
\end{remark}

\subsection{A Simple Algorithm}										                                                    

Given Lemma \ref{lemma}, we propose the following simple algorithm for the weighted Bounded Color Matching problem (see Algorithm \ref{algo_weighted}). We solve the LP (the relaxation of the ILP defined in (1) by replacing the integrality bounds with $x_e \in[0,1],~\forall e$) and obtain a basic feasible solution $x$, we construct the graph $G'$ (which we call it residual graph) such that $G' = (V',E')$ where $V' = \{v \in V(G):~ \sum_{e \in \delta(v)} x_e >0  \}$ and $E' = \{e \in E(G):~x_e >0  \}$, and we either identify a color constraint to relax (relaxation step), or a vertex constraint to relax. We iterate until we have relaxed all constraints defined by $\mathcal{F}$ and $\mathcal{Q}$.

\begin{algorithm}
\caption{First algorithm for Bipartite bounded Color Matching.}
\label{algo_weighted}

\noindent
\textit{\textbf{Input:}} Graph $G=(V,E)$, a color function $c: e \rightarrow [k]$, a profit function $p: e \rightarrow \mathbb{Q}^+$. Bounds $w_j, \forall j \in [k]$.

\noindent
\textit{\textbf{Output:}}A graph $G'$ such that $|G' \cap E_j| \leq w_j+1$, $\forall$ color classes $j \in [k]$ and $\deg(v) \leq 2$, $\forall v \in V(G')$.


\smallskip
\textbf{initialize:} $M := \emptyset$

\medskip
\textbf{while} $C \neq \emptyset$ or $E \neq \emptyset$ \textsf{do}
	\begin{enumerate}
	\item[$\alpha.$] Compute an optimal (fractional) basic solution $x$ to the \textit{current} LP.
	\item[$\beta.$] Remove all edges from the graph such that $x_e = 0$.
	\item[$\gamma.$] Remove all vertices of the graph such that $\deg(v) = 0$.
	\item[$\delta.$] \textbf{if} $\exists e = (u,v) \in E:~x_e = 1$ and $e \in C_j$

    \textbf{then} $G' := G'\cup \{ e\}$, $V = V \setminus \{u,v\}$, $w_j := w_j - 1$.

    \textbf{if} $w_j = 0$

    \textbf{then} $C := C \setminus C_j$, $E := E \setminus \{e: e \in E_j\}$.
	\item[$\varepsilon.$] (\textbf{Relaxation:}) \textbf{while $V \cup \mathcal{C} \neq \emptyset$ }
	\begin{enumerate}	
	\item \textbf{if} $\exists$ color class $C_j \in Q$ with $|E_j| \leq w_j+1$
		
\textbf{then} remove the constraint for this color class i.e. define $\mathcal{C} := \mathcal{C} \setminus C_j$.
	\item \textbf{if} $\exists$ vertex $v \in F$ such that $\deg(v) = 2$

\textbf{then} remove the constraint for that vertex.
	\end{enumerate}
	\end{enumerate}

\textbf{return} $G'$
\end{algorithm}


In each step of the algorithm, either we drop a tight vertex constraint $v \in \mathcal{F}$, or we drop a tight color constraint for a color class $C_j \in \mathcal{Q}$. Thus the algorithm will terminate in at most $|\mathcal{Q}|+|\mathcal{F}|$ steps and in each step we need to resolve the current LP. Observe that at the end of the algorithm, the graph  $G'$ is a collection of disjoint paths or cycles: this is because we remove the degree constraints for a vertex $v$ only when $\deg(v) =2$, so every vertex in $G'$ will have degree at most 2 (because every vertex eventually will become tight), and so $G'$ is a collection of disjoint paths and cycles. Similarly, in $G'$ we can have at most $w_j +1$ edges for every color class.

Next, we use the following claim, which is immediate:

\begin{lemma}
The sum of the weight of the edges in $G'$ is at least $p^Tx$ where $x$ is the initial (optimal) basic feasible solution for the LP relaxation of the Bounded Color Matching problem.
\end{lemma}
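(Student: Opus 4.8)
The plan is to track a simple potential through the iterations of Algorithm \ref{algo_weighted} and argue it never decreases. For an iteration $\psi$, let $w(G'_\psi)$ denote the total profit of the edges already committed to $G'$ (the edges fixed to $1$ in step $\delta$), and let $v(\mathrm{LP}_\psi)$ be the optimal value of the current linear program. Define the potential $\Phi_\psi = w(G'_\psi) + v(\mathrm{LP}_\psi)$. Initially $w(G'_0)=0$ and $v(\mathrm{LP}_0)=p^Tx$, so $\Phi_0 = p^Tx$. The whole proof reduces to showing $\Phi_{\psi+1}\ge \Phi_\psi$ at every step, and then reading off the value of the potential at termination.

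First I would handle the two kinds of steps separately. In a \textbf{commit} step ($\delta$), an edge $e$ with $x_e=1$ is moved to $G'$, its endpoints are deleted, and $w_{c(e)}$ is decremented; here $w(G')$ grows by $p_e$, while the current optimal solution restricted to the surviving edges remains feasible for the reduced LP, so $v(\mathrm{LP})$ drops by at most $p_e$. Hence the gain in $w(G')$ compensates the loss in $v(\mathrm{LP})$ and $\Phi$ cannot decrease. In a \textbf{relaxation} step ($\varepsilon$) we simply delete a tight vertex or color constraint; since this only enlarges the feasible region, $v(\mathrm{LP})$ can only stay the same or increase while $w(G')$ is unchanged, so again $\Phi$ does not decrease. (Deleting zero-valued edges and isolated vertices in steps $\beta,\gamma$ changes neither summand.)

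Finally I would extract the conclusion at termination. When the loop ends, all vertex and color constraints have been relaxed, so the remaining program carries only the box constraints $0\le x_e\le 1$; its optimum is attained by setting every surviving (residual) edge to $1$, giving $v(\mathrm{LP}_{\mathrm{final}})=\sum_{e\in \text{residual}} p_e$. These residual edges are precisely the ones forming the disjoint paths and cycles that constitute the output, so the total profit of $G'$ is $w(G'_{\mathrm{final}}) + \sum_{e\in\text{residual}} p_e = \Phi_{\mathrm{final}} \ge \Phi_0 = p^Tx$, which is exactly the claim.

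The step that deserves the most care — and really the only non-bookkeeping point — is the bound $v(\mathrm{LP}_{\psi+1}) \ge v(\mathrm{LP}_\psi) - p_e$ for a commit step: one must verify that the old optimal solution, after removing the committed edge, is still feasible for the reduced program, in particular that decrementing $w_{c(e)}$ by one is consistent with having deleted an edge that contributed exactly one unit to that color's budget and that the removed endpoints' degree constraints vanish together with the vertices. Everything else is just the monotonicity of an LP optimum under relaxing constraints, which is why the statement can be regarded as immediate.
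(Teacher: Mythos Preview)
Your argument is correct. The paper itself offers no proof of this lemma beyond labelling it ``immediate,'' and the potential $\Phi_\psi = w(G'_\psi) + v(\mathrm{LP}_\psi)$ you track is exactly the standard way to make such a claim precise in an iterative-relaxation setting: relaxation steps enlarge the feasible region and hence cannot decrease $v(\mathrm{LP})$, while a commit step moves an edge of value $x_e=1$ from the LP to $G'$ and the remaining coordinates of the current optimum stay feasible after deleting the endpoints and decrementing $w_{c(e)}$, so $v(\mathrm{LP})$ drops by at most $p_e$. Your reading-off of $\Phi$ at termination (all structural constraints gone, box constraints only, so $v(\mathrm{LP}_{\mathrm{final}})=\sum_{e\ \text{residual}} p_e$) matches the paper's intended meaning of $G'$ as the residual paths-and-cycles graph together with any previously committed edges. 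The one point you flag as deserving care --- feasibility of the restricted solution after a commit --- is indeed routine here because $x_e=1$ forces every other edge incident to $u$ or $v$ to have value $0$ (degree bound $1$), so nothing with positive value is lost, and the color budget drops by exactly the unit that $e$ contributed.
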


Let $CC$ be the collection of all connected components of $G'$. Let $c \in CC$ be such a connected component. Because of the structure of $c$ we know that $c$ is a union of two (disjoint) matchings $M_1^c, M_2^c$ i.e. $M_1^c \cup M_2^c = c$. Now, let $x_c$ be the restriction of $x$ to the edges of $c$. We observe that one of the matchings $M_1^c$ or $M_2^c$ has weight at least $\frac{1}{2}p^Tx_c$. And this is true for every connected component $c \in CC$. So, for every component $c \in CC$ we include in $M$ that matching $M_i^c$, $i \in \{1,2\}$ such that $p(M_i^c) \geq \frac{1}{2}p^T x_c$. Since $p(G') = \sum_{e \in E(G')} p_e x_e \geq p^T x$ for the initial $x$, we have that $p(M) \geq \frac{1}{2} p^Tx$ and in $M$ we can violate every color constraint by at most an additive 1.

\begin{theorem}
There is a polynomial time $((1/2,0), (0,1))$ bi-criteria approximation algorithm for the weighted Bounded Color Matching problem.
\end{theorem}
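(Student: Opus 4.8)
The plan is to take Algorithm~\ref{algo_weighted} as the engine and verify the three defining properties of a bi-criteria algorithm: polynomial running time, value at least $\tfrac12\,opt$, and color-class violation bounded by an additive one. First I would argue termination and the structure of the output. By Lemma~\ref{lemma}, at every iteration in which the current basic solution has no integral edge, either some tight color class satisfies $|E_j|\le w_j+1$ or some tight vertex has residual degree exactly $2$; in either case a constraint can be relaxed. Hence each iteration either rounds an edge to $1$ (removing it and its endpoints) or discards at least one constraint from $\mathcal{F}\cup\mathcal{Q}$, so the loop halts after $\mathcal{O}(|\mathcal{F}|+|\mathcal{Q}|)=\mathcal{O}(|V|+k)$ iterations, each solving one polynomial-size LP. Because a vertex constraint is relaxed only when the residual degree is $2$ and edges are never added back, the terminal residual graph $G'$ has maximum degree at most $2$; it is therefore a disjoint union of paths and cycles, and by construction every color class keeps at most $w_j+1$ of its edges.

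Next I would account for the value. Using the weight-preservation lemma stated just above together with the fact that the initial LP is a relaxation, we get $p(G')\ge p^{T}x^{(0)}\ge opt$, where $x^{(0)}$ is the first optimal basic solution. It then suffices to extract from $G'$ a matching retaining at least half of this weight. Handling each connected component $c$ separately, write $x_c$ for the restriction of the final LP solution to $c$; the target is a matching inside $c$ of weight at least $\tfrac12\,p^{T}x_c$. For a path or an even cycle this is immediate: the component is the edge-disjoint union of two matchings $M_1^{c},M_2^{c}$ with $p^{T}x_c=\sum_{e\in M_1^c}p_e x_e+\sum_{e\in M_2^c}p_e x_e$, so the heavier one has LP-weight $\ge\tfrac12 p^{T}x_c$, and since $x_e\le1$ its true weight dominates this.

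The one genuinely subtle point, and the step I expect to be the main obstacle, is the odd cycle: an odd cycle is not the union of two matchings, so the clean two-coloring breaks down. Here I would exploit that in the blossom-free formulation every vertex of $G'$ is tight of degree $2$, so walking around an odd cycle the relations $x_{e}+x_{e'}=1$ for consecutive edges force $x_e=\tfrac12$ uniformly; thus $p^{T}x_c=\tfrac12 W$ with $W=\sum_{e\in c}p_e$. Deleting the lightest edge leaves an even path, which does split into two matchings whose heavier half has true weight at least $\tfrac12\big(W-p_{\min}\big)$. Since $c$ has at least three edges, $p_{\min}\le\tfrac13 W$, so this half weighs at least $\tfrac13 W\ge\tfrac14 W=\tfrac12\,p^{T}x_c$, which already exceeds the required bound. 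Summing the chosen matchings over all components yields a matching $M$ with $p(M)\ge\tfrac12\,p(G')\ge\tfrac12\,opt$, and since $M\subseteq G'$ it inherits the bound of at most $w_j+1$ edges per color, i.e. an additive violation of one. This establishes all three properties and completes the proof.
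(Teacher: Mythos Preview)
Your approach mirrors the paper's: run Algorithm~\ref{algo_weighted}, invoke the weight-preservation lemma, and then extract a matching from the degree-$2$ graph $G'$ by picking the heavier half in each component. The paper simply asserts that every component of $G'$ is the union of two matchings and does not single out odd cycles; you are right to notice that this step is not innocent in general graphs, and you attempt to patch it.

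The patch, however, has a gap. Your deduction that $x_e=\tfrac12$ along an odd cycle needs all of its vertices to be \emph{simultaneously} tight with degree exactly~$2$ in one and the same LP solution. But the algorithm relaxes vertex constraints one at a time: once a vertex's constraint is dropped it is no longer present, so in the next basic solution that vertex need not be tight at all, and with the constraint gone the LP may well push incident edges above~$\tfrac12$. Hence there is no single ``final'' $x$ for which your chain $x_{e}+x_{e'}=1$ holds at every vertex of the cycle. A related inconsistency appears in the summing step: your per-component target is $\tfrac12\,p^{T}x_c$, but your concluding line jumps to $p(M)\ge\tfrac12\,p(G')$. Summing the per-component bounds yields only $\tfrac12\,p^{T}x$, and this equals $\tfrac12\,p(G')$ precisely when $x_e=1$ on every edge of $G'$ --- the opposite of your odd-cycle assumption. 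So as written the odd-cycle argument does not close the gap you correctly identified.
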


\subsection{Bi-criteria Algorithms for the General Bounded Color Matchings}

We propose the following algorithm which is based on the observations made in Lemma \ref{lemma}. The idea, as explained in the end of the proof of the lemma, is the following: we solve the natural LP of the problem as described by the inequalities in $\mathcal{M}_c$ to obtain an optimal basic feasible solution $x$. If $\exists j \in \mathcal{Q}: |support(x) \cap \chi_{E_j}| \leq w_j + 1$ then we relax the constraint for the corresponding color i.e. we remove it from the set of inequalities of $\mathcal{M}_c$. If $\forall j \in [k], |support(x) \cap \chi_{E_j}| \geq w_j + 2$ then by Lemma \ref{lemma} we know that $\exists v \in \mathcal{F}: |support(x) \cap \chi_{\delta(v)}| \leq 2$ which implies that $\exists e \in \delta(v): x_e \geq \frac{1}{2}$. On this edge we perform a rounding step: we include this edge in $M$ (our solution), we remove all other ``conflicting" edges in order to have a feasible matching, we decrease the corresponding color bound of $e$ appropriately, we remove the constraints of the endpoints of $e$ from $\mathcal{M}_c$ and iterate. We give the details in Algorithm \ref{bicriteria}.

\begin{algorithm}
\caption{Bi-criteria Algorithm for the uniform weight Bounded Color Matching Problem}
\label{bicriteria}
\textit{\textbf{Input:}} A general graph $G = (V,E)$ such that each edge has a color $j \in [k]$. Color bounds $w_j \in \mathbb{Z}^+, \forall j \in [k]$.

\noindent
\textit{\textbf{Output:}} A matching $M$ such that $|M \cap E_j| \leq w_j$

\medskip
\noindent
\textbf{Initialize:} $M := \emptyset$.

\textbf{while} $\mathcal{Q} \neq \emptyset$ \textsf{do}:
\begin{enumerate}
\item Compute an optimal solution $x$ to the current LP relaxation of the problem (using (\ref{general_polyhedron}) as $\mathcal{M}$ in (\ref{bcm_polyhedron})).
\item \textbf{if} $\exists e \in E(G): x_e = 0$ \textbf{then} delete all such $e$ from the graph.
\item \textbf{if} $\exists i \in support(x): x_i =1$ \textbf{then}:
	\begin{enumerate}
	\item[-] $M := M \cup \{ i \}$.
	\item[-] \textbf{if} $i \in E_j$ for some $j \in [k]$ \textbf{then} set $w_j := w_j - 1$.
	\item[-] Delete $i$ and its endpoints from $G$.
	\end{enumerate}
\item (\textbf{Relaxation}:) \textbf{if} $\exists j \in \mathcal{Q}: |support(x) \cap \chi_{E_j}| \leq w_j +1$

\textbf{then} remove the constraint for $E_j$ from the current set of linear inequalities $\mathcal{M}_c$.

\item (\textbf{Rounding}:) \textbf{else} $\exists v \in \mathcal{F}: |support(x) \cap \chi_{\delta(v)}| \leq 2 $.

Let $u_1,u_2$ the neighbors of $v$ such that $e = \{ v,u_1 \}$. Let $e \in E_j$ for some $j \in [k]$.

\textbf{do}
	\begin{enumerate}
	\item[-] $M := M \cup e$.
	\item[-] $V = V \setminus \{ v,u_1 \}$.
	\item[-] $w_j := \max \{ 0, w_j - x_e - \lambda (1-x_e)$\}.
	\item[-] $support(x) := support(x) \setminus \Big\{ \{ v,u_1\}, \{v,u_2 \}, \bigcup_{z_i \in N(u_1)} \{ u_1, z_i \}   \Big\}$.
	\end{enumerate}
and iterate (i.e. \textbf{go to} step 1 with input the graph $G = (V,support(x))$ and the  updated bounds $w_j$).
\end{enumerate}

\textbf{return} $M$.
\end{algorithm}

Observe that when performing the rounding step, the natural choices are to decrease $w_j$ by $x_e$ (giving us better approximation bounds but with worse violation of the color constraints) or by 1 (giving us worse approximation bounds but with better i.e. additive 2 violations of the color bounds). Instead, we give the freedom to control the decrease by any intermediate value in $[x_e,1]$. In the following we will prove an exact bound on the trade off between approximation and violation based on the parameter $\lambda \in [0,1]$.  Observe that $x_e + \lambda(1-x_e) \in [x_e,1]$.

Define $\vartheta = 1 -x_e$, so the bound update step is $w_j := w_j - x_e - \lambda \cdot \vartheta$. First of all we observe that for $u_1$ as in the Rounding step of  Algorithm \ref{bicriteria} we have that

$$\sum_{e' \in \delta(u_1)\setminus \{e\}} x_{e'} \,\,\,\leq\,\,\, 1-x_e \,\,\,=\,\,\,  \vartheta$$.

\begin{lemma}
In each application of the Rounding step, the objective function decreases by at most $1 + \vartheta( 1+ \lambda)$.
\end{lemma}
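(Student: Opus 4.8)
The plan is to track exactly how much of the objective function $\sum_e x_e$ is lost in a single Rounding step. When we round the edge $e=\{v,u_1\}$ with $x_e \geq \frac{1}{2}$, we add $e$ to $M$ and delete from the support the edge $e$ itself, the other edge $\{v,u_2\}$ incident to the tight vertex $v$ (recall $\deg(v)=2$ here), and all edges incident to $u_1$. The objective contribution we \emph{permanently} remove is the sum of $x_{e'}$ over all these deleted edges, but we must credit back the value we \emph{keep}, namely the fact that $e$ enters $M$ and its $x_e$-worth of objective is retained in the final matching. So the net decrease is the total fractional value deleted minus what is preserved.

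**Bounding the deleted fractional mass.** First I would itemize the deleted edges into three groups and bound each. The edge $e$ contributes $x_e$, but this is exactly the value we retain by placing $e$ in $M$, so it cancels and contributes nothing to the net loss. The second edge $\{v,u_2\}$ at the tight vertex $v$ satisfies $x_{\{v,u_2\}} = 1 - x_e = \vartheta$, since $v$ is tight and has degree $2$. Finally, the edges incident to $u_1$ other than $e$ satisfy, by the observation stated just before the lemma,
\begin{equation*}
\sum_{e' \in \delta(u_1)\setminus\{e\}} x_{e'} \,\,\leq\,\, 1 - x_e \,\,=\,\, \vartheta .
\end{equation*}
Summing, the total fractional objective mass removed from the support (excluding $e$ itself, which is retained) is at most $\vartheta + \vartheta = 2\vartheta$.

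**Incorporating the bound-update correction.** The subtle point — and the place where the parameter $\lambda$ enters — is that the objective value of the \emph{current} LP in the next iteration is measured against the \emph{updated} color bound $w_j := w_j - x_e - \lambda\vartheta$. I would argue that decreasing $w_j$ by $x_e + \lambda\vartheta$ rather than by a full unit lets the next LP reclaim some objective value, so the genuine decrease $\Delta = \mathrm{value}(\mathrm{LP}_\psi) - \mathrm{value}(\mathrm{LP}_{\psi+1})$ is not simply the deleted mass. Concretely, the fractional solution restricted to the surviving edges remains feasible for the new constraints after we account for the $x_e$-reduction automatically contributed by $e\in M$; the extra slack $\lambda\vartheta$ in the bound is what we must additionally charge. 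Combining the $2\vartheta$ deleted mass with the way the retained value $x_e$ of $e$ offsets exactly one unit of objective, the net decrease is bounded by $1 + \vartheta(1+\lambda)$: one unit accounts for committing $e$ into the integral matching (the ``$+1$''), the term $\vartheta$ captures the lost fractional value on the two $\vartheta$-bounded edge groups collapsing into the single retained edge, and the additional $\lambda\vartheta$ is the slack deliberately left in the color bound. Thus $\Delta \leq 1 + \vartheta(1+\lambda)$.

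**Main obstacle.** The delicate step is the middle one: rigorously justifying that the old fractional optimum, suitably modified on the surviving support, is feasible for $\mathrm{LP}_{\psi+1}$ with the partially-decremented bound $w_j - x_e - \lambda\vartheta$, so that the LP value drops by no more than the claimed amount. One must verify that no color constraint other than $C_j$ and no vertex constraint is violated by the restriction of $x$, and that the new bound for $C_j$ is still respected with the appropriate slack. This feasibility-transfer argument, rather than the arithmetic of summing $x_e$-values, is where the real content lies; the rest is bookkeeping over the three edge groups.
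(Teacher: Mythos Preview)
Your accounting conflates two different quantities: the decrease in the LP optimum between iterations (which is what the lemma bounds) and the \emph{net} loss after crediting the edge added to $M$ (which is handled only \emph{after} the lemma, in the paragraph that follows it in the paper). The lemma's ``objective function'' is the LP value; the fact that $e$ enters $M$ plays no role in this bound. By subtracting $x_e$ as ``retained,'' you end up with $2\vartheta$ for the edge-deletion part, and then you cannot cleanly reach $1+\vartheta(1+\lambda)$: your final paragraph asserts that ``the term $\vartheta$ captures the lost fractional value on the two $\vartheta$-bounded edge groups,'' contradicting your own earlier count of $2\vartheta$, and the ``$+1$ for committing $e$'' reintroduces exactly the contribution you had just cancelled. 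The arithmetic does not close.

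The paper's argument is simpler and avoids this trap. The deletion removes all of $\delta(v)$ and all of $\delta(u_1)\setminus\{e\}$; since $v$ is tight, $\sum_{e'\in\delta(v)}x_{e'}=1$, and $\sum_{e'\in\delta(u_1)\setminus\{e\}}x_{e'}\le 1-x_e=\vartheta$, so the deleted LP mass is at most $1+\vartheta$ (with $e$ counted once, inside $\delta(v)$). On top of this, the bound update $w_j\leftarrow w_j-x_e-\lambda\vartheta$ tightens the color constraint by $\lambda\vartheta$ beyond what removing $e$ already accounts for, so the next LP can drop by at most an additional $\lambda\vartheta$. Summing gives $1+\vartheta(1+\lambda)$. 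Your ``main obstacle'' paragraph correctly identifies the feasibility-transfer step needed to justify the $\lambda\vartheta$ term, but note the direction: the extra $\lambda\vartheta$ is a \emph{tightening} (we subtract more than $x_e$ from $w_j$), not ``slack left in the bound'' as you phrase it.
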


\begin{proof}
Each Rounding step affects all the edges that are adjacent to $v$ and to $u$. Since $v \in \mathcal{F}$ we have that $\sum_{e \in \delta(v)} x_e = 1$. Moreover, $\sum_{e' \in \delta(u)\setminus \{e\}} x_{e'} \leq 1-x_e =  \vartheta$. So the loss due to rounding all the edges adjacent to $u$ and $v$ appropriately (as described in the corresponding step of the above algorithm) is at most $1 +\vartheta$. Besides this, a loss might occur because of the color bound update. This loss can be at most $\lambda \cdot \vartheta$. This is because  $x_e + \lambda \cdot \vartheta \geq x_e$, so we decrease $w_j$ by more than $x_e$ so that it could be the case $\sum_{e' \in E_j\setminus \{ e \}} x_{e'} > w_j$ for the new updated bound. But of course $\sum_{e' \in E_j \setminus \{ e \}} x_{e'} - w_j \leq \lambda \cdot \vartheta$ for the new updated bound $w_j$. So the value of the optimal bfs $x$ decreases by an additional factor of at most $\lambda \cdot \vartheta$ giving us a total decrease of at most $1+\vartheta(1+\lambda)$.
\end{proof}

See also Figure \ref{example_round} for an illustration of the Rounding Step.

\begin{figure}
\centering
\includegraphics[scale=0.80]{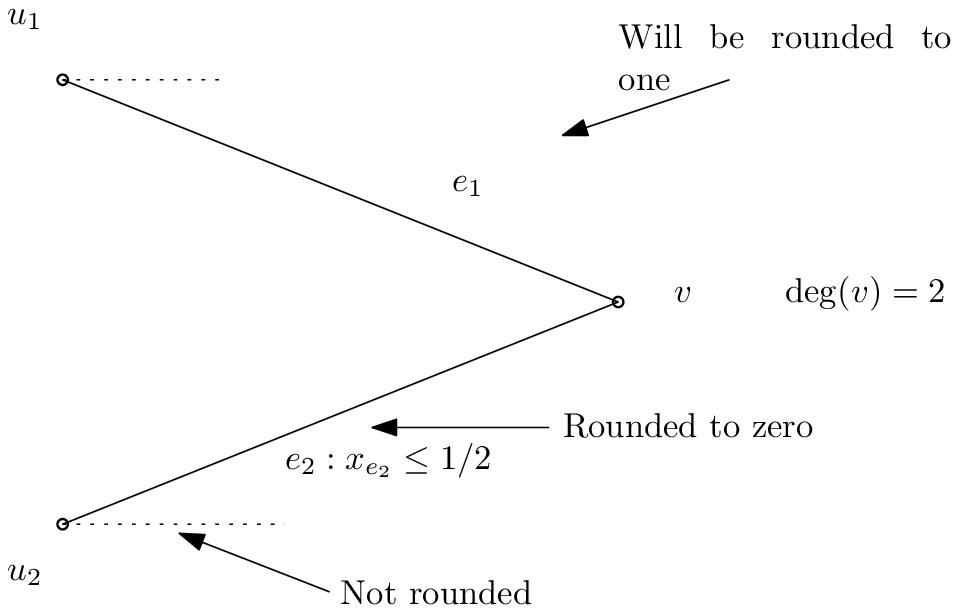}
\caption{The rounding case. We know that $v$ has degree 2. We round $e_1$ to 1, $e_2$ to zero and all edges adjacent to vertex $u_1$ will be rounded to zero. The edges adjacent to $u_2$ (besides $e_2$) will remain unchanged.}
\label{example_round}
\end{figure}

We see that in every application of a Rounding step, the objective function decreases by at most $1+\vartheta(1+\lambda)$ but we include one edge in our solution $M$, so the total true loss due to a single Rounding step is at most $\vartheta (1+\lambda)$. Since we can perform very few (namely at most $\lfloor \frac{|V|}{2} \rfloor$) iterations, we see that we can have few different values of $\vartheta$.

\begin{lemma}
Let $opt$ be the optimal objective function value and let $sol = |M|$, i.e., the value of the solution returned by Algorithm \ref{bicriteria}. Then, for every $\lambda \in [0,1]$, we have that $\frac{sol}{opt}\geq \frac{2}{2+(\lambda+1)}$.
\end{lemma}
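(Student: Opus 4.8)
The plan is to lower-bound the size $sol=|M|$ of the returned matching by comparing it against the amount of LP objective that the algorithm ``consumes'' during its execution, using the preceding lemma (which bounds the objective decrease of a single Rounding step by $1+\vartheta(1+\lambda)$) as the main engine. Since $opt$ may be taken to be the value of the initial LP relaxation, which already upper-bounds the integral optimum, it suffices to establish the equivalent inequality $opt \leq \frac{3+\lambda}{2}\, sol$.

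First I would classify the iterations into three types and count their contribution to $M$. Let $I$ be the number of integral commitments (Step 3, where an edge with $x_e=1$ is added) and let $R$ be the number of Rounding steps (Step 5). Each of these steps adds \emph{exactly} one edge to $M$, while Relaxation steps (Step 4) add none; hence $sol = I + R$, and the goal reduces to $opt \leq \frac{3+\lambda}{2}(I+R)$. Next I would run a telescoping argument on the optimal value of the \textit{current} LP as the algorithm proceeds. This value starts at $opt$ and is driven down to $0$ once the residual graph is exhausted. The three operations move it monotonically: a Relaxation only enlarges the feasible region, so the current $x$ remains feasible and the optimum cannot decrease; an integral commitment decreases it by at most $1$, since $x_e=1$ together with the vertex constraint forces every other edge at the two endpoints to be $0$ (these having already been purged in Step 2), so deleting $e$ and its endpoints removes exactly one unit of objective; and by the preceding lemma a Rounding step decreases it by at most $1+\vartheta_t(1+\lambda)$, with $\vartheta_t=1-x_{e_t}$.

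Writing $D$ for the total decrease accumulated over all the value-reducing steps, the telescoping identity reads $D = opt + (\text{total increase caused by Relaxations})$, and because relaxations contribute a non-negative amount we get $D \geq opt$. On the other hand, summing the per-step bounds gives
\[
D \;\leq\; I\cdot 1 \;+\; \sum_{t=1}^{R}\bigl(1+\vartheta_t(1+\lambda)\bigr).
\]
In every Rounding step the chosen edge satisfies $x_{e_t}\geq \tfrac12$: by Lemma \ref{lemma} the relevant tight vertex $v\in\mathcal{F}$ has residual degree $2$, so one of its two incident edges must carry at least half of the unit mass $\sum_{e\in\delta(v)}x_e=1$. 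Hence $\vartheta_t\leq\tfrac12$ and $1+\vartheta_t(1+\lambda)\leq \frac{3+\lambda}{2}$, which yields
\[
opt \;\leq\; D \;\leq\; I + \frac{3+\lambda}{2}R \;\leq\; \frac{3+\lambda}{2}(I+R) \;=\; \frac{3+\lambda}{2}\,sol,
\]
where the penultimate inequality uses $1\leq \frac{3+\lambda}{2}$ for $\lambda\geq 0$. Rearranging gives $\frac{sol}{opt}\geq \frac{2}{3+\lambda}=\frac{2}{2+(\lambda+1)}$, as claimed.

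The step I expect to be the main obstacle is making the telescoping rigorous in the presence of re-optimization: after a Relaxation the re-solved LP value can strictly \emph{increase}, so one cannot simply identify $opt$ with the sum of the per-step decreases. The resolution I rely on is the one-sided observation that relaxations only inject non-negative objective, so the value destroyed by the integral and Rounding steps is at least $opt$; this inequality (rather than an equality) is all the argument needs. A secondary point that the full proof must nail down is that no LP objective is ``stranded'' when the main loop halts (equivalently, that the residual graph is empty at termination, so the final LP value is $0$); this is what makes $D\geq opt$ valid, and it should follow from the fact that each commitment or Rounding step strictly shrinks the graph until nothing remains.
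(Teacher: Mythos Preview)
Your argument is correct and follows the same strategy as the paper: both use the preceding lemma's per-step LP decrease bound together with $\vartheta\le\tfrac12$ to show that each edge added to $M$ accounts for at most $\tfrac{3+\lambda}{2}$ units of LP objective. The paper packages the same accounting through auxiliary fractions $f_i=\frac{n_i}{opt}\bigl(1+\vartheta_i(1+\lambda)\bigr)$ with $\sum_i f_i\le 1$ and then bounds $\frac{\vartheta_i(1+\lambda)}{1+\vartheta_i(1+\lambda)}\le\frac{1+\lambda}{3+\lambda}$, whereas you run a direct telescope on the LP value; your explicit handling of Relaxation steps (noting they can only raise the current optimum, so the cumulative decrease over the committing steps is still at least $opt$) is in fact cleaner than the paper's treatment, which leaves that point implicit.
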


\begin{proof}
Denote the value of $\vartheta$ in iteration $i$ as $\vartheta_i$ so in that iteration our objective function decreases by at most $1+\vartheta_i(1+\lambda)$ for a total loss of at most $\vartheta_i(1+\lambda)$. Observe that the maximum number of iterations we can perform for a particular value of $\vartheta_i$ before the optimal initial objective function value (which we denote by $opt$) truncates to zero is $\frac{opt}{1+\vartheta_i(1+\lambda)}$. Let $f_i$ be the fraction of the maximum number of times we can perform a rounding step with a particular value $\vartheta_i$, i.e.,

\begin{displaymath}
f_i = \frac{n_i}{opt} \cdot (1+\vartheta_i(1+\lambda))
\end{displaymath}

where $n_i$ is the number of times that a particular $\vartheta_i$ occurs. Observe that $\sum_i f_i \leq 1$. This is true since otherwise the total reduction in the objective function value would be $\sum_i f_i (1+\vartheta_i (1+\lambda))opt$ which would be strictly greater than $opt$.

We conclude that the final objective function value is

\begin{eqnarray*}
opt - \sum_i \frac{opt}{1 + \vartheta_i(\lambda+1)}\cdot f_i \cdot \vartheta_i(\lambda+1) = \mathcal{G}
\end{eqnarray*}

\noindent
where the term inside the summation corresponds to the total accumulated \textit{loss} occurring  for a particular value of $\vartheta_i$ (fraction of possible maximum number of iterations for this particular $\vartheta_i$ $\times$ actual loss).

Let $sol = |M|$ i.e. the size of the matching returned by the above Algorithm. By the previous discussion it is apparent that $|M| = sol = opt - \mathcal{G}$ and so

\begin{eqnarray*}
|M| = sol = opt -\mathcal{G}  & = &  opt - \sum_i \frac{opt}{1 + \vartheta_i(\lambda+1)}\cdot f_i \cdot \vartheta_i(\lambda+1)  \\
			& =    & opt \Bigg( 1 - \sum_i \frac{\vartheta_i(\lambda+1)}{1 + \vartheta_i(\lambda+1)}\cdot f_i   \Bigg) \\
			& \geq & opt \Bigg (1 - \sum_i f_i \cdot \frac{(\lambda +1)}{2+(\lambda +1)} \Bigg) = \,\,\,     opt \cdot \frac{2}{2+(\lambda+1)}
\end{eqnarray*}

\noindent
so that

\begin{eqnarray*}
\frac{sol}{opt} \,\,\,\geq\,\,\,  \frac{2}{2+(\lambda+1)} \in \Big[\frac{1}{2}, \frac{2}{3} \Big]
\end{eqnarray*}

\noindent
depending on the choice of $\lambda$.
\end{proof}

Now we  calculate how much each color bound $w_j$ can be violated.

\begin{lemma}
For the final solution $M$ returned by the algorithm and for every color class $C_j \in \mathcal{C}$ we have that $|M \cap E_j| \leq \frac{2w_j}{\lambda +1} +1$ for any choice of $\lambda \in [0,1]$.
\end{lemma}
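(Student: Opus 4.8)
The plan is to fix a color class $C_j$ and track how the residual budget $w_j$ is consumed by the edges of color $j$ that the algorithm commits to $M$, splitting these edges into those added while the budget constraint for $C_j$ is still active and those added after it has been relaxed in step 4.

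First I would analyze the active phase. Every color-$j$ edge that enters $M$ while the constraint is active does so either through the integral step (step 3), where $w_j$ is decremented by exactly $1$, or through the rounding step (step 5), where $w_j$ is decremented by $x_e + \lambda(1-x_e)$. The essential observation is that the rounded edge $e = \{v,u_1\}$ satisfies $x_e \ge 1/2$: since $v \in \mathcal{F}$ is tight with exactly two incident edges in the residual graph, $\sum_{e' \in \delta(v)} x_{e'} = 1$ forces one of them to have value at least $1/2$, and $u_1$ is chosen accordingly. Writing $x_e + \lambda(1-x_e) = \lambda + (1-\lambda)x_e$ and using $x_e \ge 1/2$ together with $\lambda \in [0,1]$ gives a decrement of at least $\frac{1+\lambda}{2}$; the integral decrement $1$ also dominates $\frac{1+\lambda}{2}$. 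Hence each active-phase commitment of a color-$j$ edge costs the residual budget at least $\frac{1+\lambda}{2}$, and since the budget never drops below $0$, the number $a$ of such commitments obeys $a\cdot\frac{1+\lambda}{2} \le w_j$, i.e. $a \le \frac{2w_j}{1+\lambda}$.

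Next I would bound the edges committed after relaxation. The constraint for $C_j$ is dropped in step 4 only when at most $w_j^{\mathrm{rel}}+1$ edges of color $j$ survive in the support, where $w_j^{\mathrm{rel}}$ is the residual budget at that moment; since the set of color-$j$ edges only shrinks thereafter, at most $w_j^{\mathrm{rel}}+1$ further color-$j$ edges can enter $M$. Combining with the active-phase accounting, $w_j^{\mathrm{rel}} \le w_j - a\frac{1+\lambda}{2}$, so the total number of color-$j$ edges in $M$ is at most
\[
a + \Bigl(w_j - a\tfrac{1+\lambda}{2}\Bigr) + 1 \,=\, w_j + a\tfrac{1-\lambda}{2} + 1 .
\]
Since $\frac{1-\lambda}{2}\ge 0$, this is nondecreasing in $a$, so substituting the maximal value $a = \frac{2w_j}{1+\lambda}$ collapses it to $\frac{2w_j}{1+\lambda}+1$, the claimed bound.

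The step I expect to require the most care is the interaction between the truncation $\max\{0,\cdot\}$ in the budget update and the counting of $a$. When the residual budget actually reaches $0$ during the active phase, the sum of the \emph{intended} decrements can exceed $w_j$, so the inequality $a\frac{1+\lambda}{2}\le w_j$ must be argued for the commitments strictly before the budget hits zero. In that situation, however, the next LP forces every remaining color-$j$ edge to value $0$ (its constraint reads $\sum_{e\in E_j}x_e \le 0$); these edges are deleted in step 2 and the constraint is relaxed with no surviving color-$j$ edge, so the relaxed phase contributes nothing and the total stays within $\frac{2w_j}{1+\lambda}+1$. I would therefore split into the case where $C_j$ is relaxed with strictly positive residual budget (where $\sum$ of decrements is below $w_j$ and the combined bound above applies verbatim) and the case where the budget is exhausted (where the post-relaxation term vanishes), verifying the target inequality in each.
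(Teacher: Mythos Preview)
Your argument is correct and follows the same core idea as the paper: each commitment of a color-$j$ edge during the active phase decreases the residual budget by at least $\tfrac{1+\lambda}{2}$ (because the rounded edge has $x_e\ge 1/2$), so at most $\tfrac{2w_j}{1+\lambda}$ such commitments are possible, and the relaxation contributes the extra $+1$.

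Where you differ is in the treatment of the relaxation contribution. The paper simply asserts that the $+1$ ``comes from a possible relaxation step when $w_j\le 1$,'' which is terse to the point of being misleading: relaxation can occur with any residual $w_j^{\mathrm{rel}}$, and at that moment up to $w_j^{\mathrm{rel}}+1$ color-$j$ edges survive in the support, all of which could later enter $M$. Your two-phase accounting, bounding the post-relaxation contribution by $w_j^{\mathrm{rel}}+1\le w_j - a\tfrac{1+\lambda}{2}+1$ and then optimizing over $a$, actually closes this gap and shows why the total still collapses to $\tfrac{2w_j}{1+\lambda}+1$. Your explicit case split for the truncation $\max\{0,\cdot\}$ is likewise something the paper omits. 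So the approach is the same, but your execution is the more complete one.
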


\begin{proof}
Again we will provide an upper bound on the violation in terms of the chosen parameter $\lambda \in [0,1]$. As discussed earlier, at each application of a rounding step, we decrease the color bound $w_j$ of the color $C_j$ of the edge for which we performed the rounding step (step 5 of Algorithm \ref{bicriteria})  by $x_e + \lambda(1-x_e) \in [x_e,1]$. For every such rounding step, we include one such edge in our solution $M$. So the maximum number of edges of any particular color class $C_j$ we can include in $M$ (before $w_j$ truncates to zero) is

\begin{eqnarray}
\frac{w_j}{x_e + \lambda(1 - x_e)} \,\,\,=\,\,\, \frac{w_j}{x_e (1 - \lambda) + \lambda} \,\,\,\leq\,\,\, \frac{2w_j}{\lambda +1}
\end{eqnarray}
where the last inequality follows because of the fact that we perform rounding steps at edges with fractional value $\geq \frac{1}{2}$. The extra $+1$ term in the above formula comes from a possible relaxation step when $w_j \leq 1$ and there is at most one such step per color class.
\end{proof}

\begin{corollary}
For every $\lambda\in[0,1]$, there exists a $((\frac{2}{2+(\lambda+1)},0),(\frac{2}{\lambda +1},1))$ bi-criteria approximation algorithm for the uniform weight Bounded Color Matching problem in general graphs.
\end{corollary}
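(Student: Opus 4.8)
The plan is to obtain the corollary as a direct assembly of the two immediately preceding lemmas, verifying in turn each of the three conditions in the definition of an $((\alpha,\beta),(\gamma,\delta))$ bi-criteria approximation algorithm for the output of Algorithm \ref{bicriteria}, instantiated with $\alpha = \frac{2}{3+\lambda}$, $\beta = 0$, $\gamma = \frac{2}{\lambda+1}$ and $\delta = 1$.

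First I would establish polynomial running time (condition 1). The outer \textbf{while} loop halts once $\mathcal{Q}$ is empty, and every pass either fixes an integral edge or performs a Rounding step — each of which commits an edge to $M$ and deletes its two endpoints — or else performs a Relaxation step, which discards one tight color constraint. Hence the number of iterations is at most $O(|V| + k)$, which is polynomial; crucially, Lemma \ref{lemma} guarantees that in every iteration at least one of these steps is applicable, so the algorithm never stalls. Each iteration merely resolves the current LP relaxation, which is polynomial via the ellipsoid method with the blossom separation oracle recalled in Section \ref{preliminaries}. I would also record the loop invariant that $M$ remains a matching: the Rounding step deletes from the support every edge incident to $v$ and to $u_1$, so no edge subsequently added to $M$ can share an endpoint with $\{v,u_1\}$.

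The profit guarantee (condition 2) is then exactly the approximation Lemma, which proves $\frac{sol}{opt} \geq \frac{2}{2+(\lambda+1)} = \frac{2}{3+\lambda}$; since $\beta = 0$ this is the required bound $sol \geq \frac{2}{3+\lambda}\, opt$. The color-bound guarantee (condition 3) is exactly the violation Lemma, which gives $|M \cap E_j| \leq \frac{2 w_j}{\lambda+1} + 1$ for every color class $C_j$, i.e.\ $\gamma = \frac{2}{\lambda+1}$ and $\delta = 1$.

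Since the corollary is a packaging step, no genuine obstacle remains. The only two points deserving care — and both already discharged in the earlier lemmas — are (i) the loop invariant keeping $M$ a feasible matching, which follows from the support-deletion rule of the Rounding step, and (ii) the aggregation of the per-iteration loss $1+\vartheta_i(1+\lambda)$ into the claimed global bound, which relies on the inequality $\sum_i f_i \leq 1$ proved in the approximation Lemma. Here they are merely invoked rather than reproved.
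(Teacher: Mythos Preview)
Your proposal is correct and mirrors the paper's (implicit) derivation: the corollary is stated without proof precisely because it is the direct combination of the two preceding lemmas together with the evident polynomial-time bound, exactly as you assemble it. Your added remarks on termination and the matching invariant are sound elaborations of points the paper leaves tacit.
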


Observe that by selecting $\lambda \leq \frac{1}{2}$, we get an approximate solution strictly greater than $\frac{4opt}{7} \simeq 0.571$ (beating the $0.5$ integrality gap) in which each color bound $w_j$ is violated by \textit{at most} an additive $\frac{w_j}{3} + 1$ edges i.e. $M$ has a surplus of at most $\frac{w_j}{3} + 1$ of each color $C_j$.

\subsection{A 1/2-approximation algorithm for the uniform weight case}

So far, with the exception of the greedy 1/3 algorithm, we have presented bi-criteria algorithms that may potentially violate the color bounds $w_j$. Our main result is that by allowing moderate violation of these color bounds, we can ``beat" the integrality gap. The problem is these algorithms fail to achieve the 1/2 integrality gap \textit{without} violation (the two previous algorithms give 1/2 approximation on the objective function with at most one extra edge per color). In this subsection we will show how we can design a 1/2 approximation algorithm for the uniform weight Bounded Color Matching problem in general graphs, showing that the integrality gap is essentially 1/2.

The algorithm again uses the characterization provided by Lemma \ref{lemma}, with the only difference is that we \textit{do not} perform any relaxation step (such a step always result in a color bound violation). Instead, we perform only rounding steps which are summarized below (the rest of the algorithm is identical with the one of the previous subsection) i.e. the Relaxation step is replaced by another Rounding step:

\begin{itemize}
\item \textsf{if} $\exists v \in \mathcal{F}$ such that $\deg(v) =2$, \textsf{then} perform the usual Rounding step (see previous algorithm's step 5) with parameter $\lambda = 1$.
\item \textsf{else} $\exists C_j \in \mathcal{Q}$ such that

$$|support(x) \cap \chi_{E_j}| \leq w_j +1 \Rightarrow \exists e =\{ u,v \} \in C_j: x_e \geq \frac{w_j}{w_j+1}.$$

 Now, perform a Rounding step for that edge.
\end{itemize}

\noindent
The last step is done as follows: Round up $x_e$ to 1. Round down to zero all other edges adjacent to vertices $u$ and $v$ (the endpoints of e). Decrease $w_j$ by 1 and iterate.

We claim that this simple step result to a 1/2 approximation algorithm for the Bounded Color Matching problem without any violation:

\begin{lemma}
If instead or the Relaxation step (step 4. of Algorithm \ref{bicriteria}) we perform a Rounding step as described above, then Algorithm 4 is an 1/2-approximation algorithm for the BCM problem in uniform weighted graphs without any violation.
\end{lemma}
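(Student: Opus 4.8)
The plan is to bound the total loss in objective value across all rounding steps and show it is at most $\mathrm{opt}/2$, so that the returned matching has value at least $\mathrm{opt}/2$; feasibility (no violation) will follow by construction. The key observation is that Algorithm~\ref{bicriteria} is now driven entirely by two kinds of rounding steps, so I would first set up a uniform accounting of the per-step loss. For a \emph{vertex} rounding step (the case $v \in \mathcal{F}$, $\deg(v)=2$ with parameter $\lambda=1$), the earlier lemma already gives that the objective decreases by at most $1+\vartheta(1+\lambda) = 1+2\vartheta$, where $\vartheta = 1-x_e \le 1/2$ since the rounded edge has $x_e \ge 1/2$; subtracting the one edge we add to $M$, the \emph{net} loss is at most $\vartheta(1+\lambda) = 2\vartheta \le 1$. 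For a \emph{color} rounding step (the new case $C_j \in \mathcal{Q}$, $|support(x)\cap\chi_{E_j}| \le w_j+1$, so some edge $e$ has $x_e \ge \frac{w_j}{w_j+1} \ge \frac{1}{2}$), I would argue analogously: rounding $x_e$ up to $1$ and all edges adjacent to $u$ and $v$ down to zero costs at most $1 + 2(1-x_e)$ in objective (each endpoint of $e$ contributes total edge-mass at most $1$, so the mass destroyed besides $x_e$ is at most $2(1-x_e)$), while we add one edge to $M$, giving net loss at most $2(1-x_e) \le 1$ as well.

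With both step types giving net loss at most $1$ per edge added, the plan is the standard amortization: if the algorithm adds $sol = |M|$ edges in total, the accumulated net loss is at most $sol$, hence $\mathrm{opt} - sol \le sol$, which rearranges to $sol \ge \mathrm{opt}/2$. (Equivalently, one can invoke the $\lambda=1$ instance of the previous subsection's analysis for the vertex steps and check directly that the color steps respect the same per-step bound, so the final bound $\frac{sol}{opt} \ge \frac{2}{2+(\lambda+1)} = \frac{1}{2}$ carries over verbatim.) Since each edge of a matching is counted once, $sol$ is exactly the cardinality of the returned matching, so this establishes the approximation guarantee.

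For feasibility I would argue that no color bound is ever violated. The decrease of $w_j$ by exactly $1$ in each rounding step (both the vertex step with $\lambda=1$, where $x_e + \lambda(1-x_e) = 1$, and the color step, which explicitly decreases $w_j$ by $1$) is an honest integral decrement, so the invariant ``the current LP is over the residual color bounds'' is maintained, and the algorithm adds at most $w_j$ edges of color $C_j$ in total; crucially, because we have removed the relaxation branch, no color constraint is ever dropped, which is precisely what the $+1$ surplus in the earlier analysis came from. Finally I would note termination: every iteration either fixes an edge to $1$ (adding it to $M$ and deleting its endpoints) or performs one of the two rounding steps, and Lemma~\ref{lemma} guarantees that whenever $\mathcal{Q}\neq\emptyset$ and there is no unit-valued edge, at least one of the two rounding cases applies, so progress is always made and the loop halts in polynomially many iterations, each solving one LP.

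\textbf{Main obstacle.} The delicate point, and the one I would check most carefully, is the color rounding step: I must verify that after rounding $x_e$ up to $1$ and zeroing the edges at $u$ and $v$, the residual instance remains LP-feasible with the updated bound $w_j - 1$, and that the net-loss-at-most-$1$ bound genuinely holds even though here it is a \emph{color} constraint (not a vertex constraint) that triggered the step. In particular one must confirm that the guarantee $x_e \ge \frac{w_j}{w_j+1} \ge \frac12$ really forces $\vartheta = 1-x_e \le \frac12$ for all $w_j \ge 1$, so that the two step types are interchangeable from the accounting's viewpoint and the clean $\frac12$ bound emerges without any surplus term.
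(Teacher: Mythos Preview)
Your approach is essentially the same as the paper's: a per-step accounting showing that in every rounding step the LP value drops by at most $2$ while one edge is gained, hence $sol \ge \mathrm{opt}/2$, together with the observation that replacing the relaxation branch by a rounding branch with $\lambda=1$ makes every color-bound decrement an honest $-1$, so no violation occurs.

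One point to tighten: in your color-step accounting you write that the cost is at most $1+2(1-x_e)$ but justify it only by the edge-mass at the two endpoints, which gives $x_e + 2(1-x_e) = 2 - x_e$. The missing $(1-x_e)$ is exactly the additional loss caused by the update $w_j \mapsto w_j - 1$: since $C_j$ was tight, the remaining edges of color $C_j$ sum to $w_j - x_e$, which exceeds the new bound $w_j - 1$ by $1-x_e$, so the next LP may lose up to $1-x_e$ more. The paper makes this third term explicit (``color bound update''), obtaining $3 - 2x_e \le 2$. Your stated bound already equals $3-2x_e$, so the conclusion stands, but the parenthetical justification should include this term rather than only the vertex mass.
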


\begin{proof}
In order to prove the claim of the lemma we will distinguish between the two cases corresponding to the two different rounding steps. In each step, obviously, the gain that we have is 1 (we take one edge in our final solution).
We will show that in both cases the decrease of the optimal objective function value after the resolution of the LP is at most 2. In conclusion we will show that the gain over loss in each step is at least 1/2, and this will conclude the claim.

To this end, let $\mathrm{LP}(k)$ be the optimal objective function value at step $k$. At this step, we perform one of the two above rounding steps and resolve the new LP which will have optimal value $\mathrm{LP}(k+1)$. The main claim is that $\mathrm{LP}(k) - \mathrm{LP}(k+1) \leq 2$.


If the performed rounding step is done on an edge $e$ because of a vertex $v \in \mathcal{F}$ with degree 2, then the total decrease in the objective function value in the next iteration is at most

\begin{eqnarray*}
\sum_{e \in \delta(v)} x_e + \sum_{e' \in \delta(u) \setminus e} x_{e'} + (1-x_e) & \leq & \\
 1 + (1-x_e) + (1-x_e) & = & \\
 3 -2x_e & \leq & 2
\end{eqnarray*}

where the first term corresponds to the two edges adjacent to $v$ ($e$ will be rounded to one and the other to zero), the second term corresponds to the edges (excluding $e$) adjacent to $u$ that will be rounded to zero, and the third term corresponds to a potential reduce of some other edges of color $C_j$ such that $e \in C_j$, because of the color bound update step $w_j = w_j -1$. The reason for the last term is the following:  color $C_j$ of edge $e$ can be (almost) tight, thus if $x_{e}$ is close to 1/2 then this leaves us with a surplus of $1-x_e$ (close to 1/2) of the edges of color $C_j$. So, in the next iteration, the value of the rest of the edges of color $C_j$ will be reduced by at most $1-x_e$ but it can be the case that we cannot take advantage of this decrease to increase some other color class. For example, assume that $e$ is blue, $w_{blue} = 10$ and $x_{e} = 1/2$. Then $\sum_{e \in E_{blue}, e' \neq e} x_{e'} = w_{blue} -1/2 =9.5$ in the worst case. But the update step will reduce $w_{blue}$ by $1$ i.e. in the next iteration $w_{blue} = 9$ so the new LP solution will have to reduce the value of the blue edges by 1/2 (from 9.5 to 9).

Now we consider the case where the rounding step is done because of the presence of a tight color class $C_j \in \mathcal{Q}$ such that $|support(x) \cap \chi_{E_j}| = w_j +1$. In this case, we know that $\exists e \in C_j: x_e \geq \frac{w_j}{w_j+1}$. This edge will be rounded to one, and some other appropriate edges will be rounded to zero in such a way to preserve feasibility. Let $\{ u,v \} = e$ as before. Then when we round $x_e$ to one, we need to round all other edges adjacent to $u$ and $v$ to zero in order to have a feasible matching. In order to compute the total decrease $\Delta$ in the LP value by such a step, we first compute what we had before the performed rounding step. Thus, the total decrease of the LP value is at most

\begin{displaymath}
\Delta = \underbrace{(1~-~x_e)}_\text{{decrease on vertex $v$}} + \underbrace{(1~-~x_e)}_\text{{decrease on vertex $u$}} + \underbrace{(1~-~x_e)}_\text{{color bound update}} + ~~x_e  \,\,\,=\,\,\, 3 - 2x_e
\end{displaymath}

\noindent
i.e. the first two terms correspond to the loss due to rounding to zero the edges adjacent to $u$ and $v$, and the third term corresponds to loss due to color bound update (see the previous case for justification). The fourth term is simply the value of the edge $e$. Now, due to the fact that $\frac{w_j}{w_j+1} \geq \frac{1}{2}$, we have that

\begin{displaymath}
\Delta \,\,\,\leq\,\,\, 3 - 2x_e \,\,\,\leq\,\,\, 3 - 2 \cdot \frac{w_j}{w_j +1} \,\,\,\leq\,\,\, 3 -2 \cdot\frac{1}{2} = 2
\end{displaymath}

\noindent
Observe that $w_j$ remains  integral in such a case, because it is initially integer and in every step it is reduced exactly by one unit.

So, in conclusion, in each rounding step, the total accumulated loss is at most two units in the LP value, but the total gain is exactly one unit (we add one edge in $M$) proving the claim.
\end{proof}

\begin{theorem}
There exists an $\frac{1}{2}$ polynomial time approximation algorithm for the uniform weight Bounded Color Matching problem.
\end{theorem}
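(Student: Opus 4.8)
The statement is an immediate consequence of the preceding lemma, so the plan is simply to aggregate its local (per-step) guarantee into a global one and to record feasibility and polynomiality. Let $opt$ denote the value of the initial LP relaxation of $\mathcal{M}_c$; since this LP is a relaxation of the integer program, $opt$ is an upper bound on the value of the optimal bounded-color matching, and it therefore suffices to show $sol = |M| \geq opt/2$.

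First I would set up the telescoping argument. Index the iterations $k = 0,1,\dots,T-1$ and let $\mathrm{LP}(k)$ be the optimal value of the LP solved at iteration $k$, with $\mathrm{LP}(0) = opt$. Every iteration adds exactly one edge to $M$ --- either an integral edge detected in the step handling $x_i = 1$, or the edge produced by one of the two rounding branches --- so $|M| = T$. The preceding lemma shows that in either rounding branch $\mathrm{LP}(k) - \mathrm{LP}(k+1) \leq 2$, and the integral-edge case is only cheaper (rounding an edge already at value $1$ forces no positively-valued neighbour to change, so the drop is at most $1$ plus the color-bound correction). Because the residual LP value is non-negative and decreases to $0$ once the graph is exhausted, telescoping yields $opt = \sum_{k=0}^{T-1}\big(\mathrm{LP}(k)-\mathrm{LP}(k+1)\big) \leq 2T = 2|M|$, i.e. $sol/opt \geq 1/2$.

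Next I would verify feasibility and running time. The only modification from Algorithm \ref{bicriteria} is that the Relaxation step is suppressed and replaced by a rounding step; since relaxation was the sole mechanism that could overshoot a bound, no violation can now occur. Concretely, each time an edge of color $j$ is committed to $M$ we decrement $w_j$ by exactly one, $w_j$ stays integral throughout, and the color constraint is re-solved in the next LP, so at most $w_j$ edges of color $j$ ever enter $M$. That $M$ is a matching is inherited verbatim from the rounding step, which rounds one edge to $1$ and all adjacent edges to $0$. For the running time, each iteration eliminates at least one edge (or saturates two vertices), bounding the number of iterations by $O(|E|)$, and each LP over (\ref{general_polyhedron}) together with the color budgets is solved in polynomial time by the Ellipsoid method with the standard matching separation oracle.

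The one point that needs genuine care --- and the place I would concentrate the write-up --- is \emph{closing the telescope}: one must confirm that every edge-insertion event (including the integral-edge shortcut) obeys the loss-at-most-$2$ / gain-exactly-$1$ bookkeeping of the lemma, and that the final residual LP value is indeed $0$, so that the summed per-step losses account for the entire initial value $opt$. Once these are in hand, the $\tfrac{1}{2}$ ratio together with the absence of any color-bound violation follow immediately, and the theorem is proved.
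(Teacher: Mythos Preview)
Your proposal is correct and follows essentially the same approach as the paper: the theorem in the paper carries no separate proof and is stated as an immediate consequence of the preceding lemma, whose per-step ``gain $1$ / loss $\leq 2$'' conclusion you simply telescope. Your write-up is, if anything, a bit more explicit than the paper's, since you spell out the aggregation, handle the integral-edge branch separately, and record the feasibility and polynomial-time checks.
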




\section{Conclusions}										                                                    

In this work, we have presented bi-criteria approximation algorithms for the Bounded Color Matching problem (a.k.a. Restricted Matching problem) that achieve \textit{constant} approximation guarantee on both criteria of

\begin{enumerate}
\item maximizing the objective function value, and
\item minimizing the violation of the color constraints bounds.
\end{enumerate}

Our techniques were based on polyhedral characterizations of the natural linear program formulation of the problem (described by the $\mathcal{M}_c$ inequalities). This polyhedron has integrality gap $\frac{1}{2}$. We have presented an $\frac{1}{2}$-approximation algorithm for the uniform wight case and we have shown how, by allowing a slight violation of the color bound constraints, we can design approximation algorithms with better than $\frac{1}{2}$ guarantee (in the objective function value). Our proposed algorithm in fact is flexible enough to allow any desired guarantee within some given bounds, and provides a trade-off between the approximability of the objective function value and the violations of the color bounds.

Moreover, for the special case where $w_j =1$, $\forall j \in [k]$ we have shown how we can obtain an \textit{asymptotic} approximation guarantee (i.e. approximate the objective function to within arbitrary precision) but at the cost of violating the color bounds $w_j$ by at most an additive $\alpha -1$ for a given parameter $\alpha \in \mathbb{Z}^+$.

Given the limitation of the natural linear program formulation of the problem (captured by its integrality gap), it is natural to ask if there is another linear program formulation of the problem with better behavior. It is not obvious at all how such a linear program might look like (if it exists). But fortunately, there exists machinery from polyhedral theory, called ``lift-and-project" method, that allows us to strengthen a particular linear program by adding a set of valid inequalities. Many such lift and project methods have been proposed so far for example by Sherali and Adams \cite{DBLP:journals/siamdm/SheraliA90}, by Lov\'asz and Schrijver \cite{Lovász91conesof}, by Balas, Ceria and Cornu\'ejols \cite{DBLP:journals/mp/BalasCC93}, \cite{DBLP:conf/soda/BalasCC93} and by Lasserre \cite{DBLP:journals/siamjo/Lasserre02}, \cite{DBLP:journals/siamjo/Lasserre01}.

Let $P_0 = \{ x \in \{0,1\}^n:~Ax \leq \beta \}$, $A \in \mathbb{R}^{m \times n}, \beta \in \mathbb{R}^m$ be an initial polyhedron in the $n$-th dimensional space. All the previous techniques follow the same pattern: they operate in rounds, and in each round a specific set of linear  (Sherali-Adams and Balas, Ceria and Cornu\'ejols) or semi-definite (in the case of Lov\'asz and Schrijver and Lasserre) inequalities is added. Thus we obtain a \textit{hierarchy} of tighter formulations $K_n \subseteq K_{n-1} \subseteq \cdots \subseteq K_1$ of an initial relaxation of an integral polyhedron $I$ where $K_1$ is just the relaxation of $P_0$. The important features, common in all these hierarchies is that we can efficiently optimize any linear (or semi-definite) objective function over $K_t$ for any fixed $t$ and, moreover, after $n$ at most steps, we will arrive at an exact formulation of the convex hull of all the integral points if $I$, i.e., $K_n = P_0$.

All the previous hierarchies (except the Balas, Ceria and Cornu\'ejols) are placed in a common framework in the work of Monique Laurent \cite{DBLP:journals/mor/Laurent03} who proves, among many other things, that the Sherali-Adams hierarchy is incomparable than the Lov\'asz  Schrijver hierarchy but stronger that Lov\'asz  Schrijver \textit{with linear} lifting inequalities. Moreover, it is shown that the Lasserre hierarchy is stronger than any of the previous. It would be a very interesting research direction to investigate how the integrality gap changes after the application of any of these hierarchies. And, moreover, the possibility to obtain a tighter description of the convex hull of the integral points of the bounded matching polyhedron, leaves open the possibility of designing approximation algorithms with better performance guarantee.

\bigskip
\noindent
\textbf{Acknowledgements:} We would like to thank Christos Nomikos for introducing us to the problem and for many valuable and pleasant discussions. Moreover, we would like to sincerely thank the reviewers of the 2nd International Symposium on Combinatorial Optimization (ISCO '12) for carefully reading the preliminary version of this work and for many helpful suggestions and comments that improved the presentation of this work.

\bibliographystyle{acm}
\bibliography{references}
\end{document}